\newcommand{\copyrightnote}[2]{{\renewcommand{\thefootnote}{}
 \footnotetext{\small\it
\begin{flushleft}
 \copyright \ #1   #2  
\end{flushleft}}}}
\newcommand{\Name}[1]{\begin{flushleft}
                       \LARGE \bf #1
                       \end{flushleft}\vspace{-3mm}}
\newcommand{\Author}[1]{\begin{flushleft}
                       \it #1 \end{flushleft}}
\newcommand{\Address}[1]{\begin{flushleft}
                       \it #1 \end{flushleft}}
\newcommand{\Date}[1]{\begin{flushleft}
                      \small  \it #1 \end{flushleft}}
\newcommand{\evenhead}{Author \ name}
\newcommand{\oddhead}{Article \ name}
\renewcommand{\@evenhead}{
\hspace*{-3pt}\raisebox{-15pt}[\headheight][0pt]{\vbox{\hbox to \textwidth
{\thepage \hfil \evenhead}\vskip4pt \hrule}}}
\renewcommand{\@oddhead}{
\hspace*{-3pt}\raisebox{-15pt}[\headheight][0pt]{\vbox{\hbox to \textwidth
{\oddhead \hfil \thepage}\vskip4pt\hrule}}}
\renewcommand{\@evenfoot}{}
\renewcommand{\@oddfoot}{}
\long\def\@makecaption#1#2{%
  \vskip\abovecaptionskip
  \sbox\@tempboxa{\small \textbf{#1.}\ \ #2}%
  \ifdim \wd\@tempboxa >\hsize
    {\small \textbf{#1.}\ \ #2}\par
  \else
    \global \@minipagefalse
    \hb@xt@\hsize{\hfil\box\@tempboxa\hfil}%
  \fi
  \vskip\belowcaptionskip}
\newcommand{\JNMPnumberwithin}[3][\arabic]{%
  \@ifundefined{c@#2}{\@nocounterr{#2}}{%
    \@ifundefined{c@#3}{\@nocnterr{#3}}{%
      \@addtoreset{#2}{#3}%
      \@xp\xdef\csname the#2\endcsname{%
        \@xp\@nx\csname the#3\endcsname .\@nx#1{#2}}}}%
}
\renewenvironment{proof}[1][\proofname]{\par
  \normalfont
  \topsep6\p@\@plus6\p@ \trivlist
  \item[\hskip\labelsep\textbf{%
    #1\@addpunct{.}}]\ignorespaces
}{%
  \qed\endtrivlist
}
\newcommand{\resetfootnoterule} {
  \renewcommand\footnoterule{%
  \kern-3\p@
  \hrule\@width.4\columnwidth
  \kern2.6\p@}
}
\renewcommand{\footnoterule}{}
\theoremstyle{definition}
\theoremstyle{plain}
\newtheorem{Th}{Theorem}[section]
\newtheorem{Cor}[Th]{Corollary}
\newtheorem{Lem}[Th]{Lemma}
\newtheorem{Prop}[Th]{Proposition}
\theoremstyle{definition}
\newtheorem{Def}{Definition}[section]
\theoremstyle{remark}
\newtheorem*{Rem}{Remark}%[section]
\numberwithin{equation}{section}
\newcommand{\EE}{{\mathbb E}}
\newcommand{\ZZ}{{\mathbb Z}}
\newcommand{\RR}{{\mathbb R}}
\newcommand{\be}{\boldsymbol{e}}
\newcommand{\bn}{\boldsymbol{n}}
\newcommand{\br}{\boldsymbol{r}}
\newcommand{\bu}{\boldsymbol{u}}
\newcommand{\bv}{\boldsymbol{v}}
\newcommand{\bY}{{\boldsymbol Y}}
\newcommand{\bX}{\boldsymbol{X}}
\newcommand{\bTh}{\boldsymbol{\Theta}}
\newcommand{\bnu}{\boldsymbol{\nu}}
\newcommand{\bpsi}{\boldsymbol{\psi}}
\begin{document}

\renewcommand{\evenhead}{ {\LARGE\textcolor{blue!10!black!40!green}{{\sf \ \ \ ]ocnmp[}}}\strut\hfill A Doliwa}
\renewcommand{\oddhead}{ {\LARGE\textcolor{blue!10!black!40!green}{{\sf ]ocnmp[}}}\ \ \ \ \  Bäcklund transformations as integrable discretization}

%%%% Matter for the first page 
\thispagestyle{empty}
\newcommand{\FistPageHead}[3]{
\begin{flushleft}
\raisebox{8mm}[0pt][0pt]
{\footnotesize \sf
\parbox{150mm}{{Open Communications in Nonlinear Mathematical Physics}\ \ \ \ {\LARGE\textcolor{blue!10!black!40!green}{]ocnmp[}}
\quad Special Issue 1, 2024\ \  pp
#2\hfill {\sc #3}}}\vspace{-13mm}
\end{flushleft}}

\FistPageHead{1}{\pageref{firstpage}--\pageref{lastpage}}{ \ \ }

\strut\hfill

\strut\hfill

\copyrightnote{The author(s). Distributed under a Creative Commons Attribution 4.0 International License}

\begin{center}
%{\Large  {\sf This article is part of a Special Issue in Memory of Professor Decio Levi}}
{  {\bf This article is part of an OCNMP Special Issue\\ 
\smallskip
in Memory of Professor Decio Levi}}
\end{center}

\smallskip

\Name{B\"{a}cklund transformations as integrable discretization. The geometric approach}

\Author{Adam Doliwa}

\Address{Faculty of Mathematics and Computer Science\\
	University of Warmia and Mazury in Olsztyn\\
	ul.~S{\l}oneczna~54, 10-710~Olsztyn, Poland}

\Date{Received August 31, 2023; Accepted November 27, 2023}

\smallskip

\textit{To memory of Decio Levi and Antoni Sym, distinguished researchers and friends}

\setcounter{equation}{0}

\begin{abstract}

\noindent 
We present interpretation of known results in the theory of discrete asymptotic and discrete conjugate nets from the \emph{discretization by B\"{a}cklund transformations} point of view. We collect both classical formulas of XIXth century differential geometry of surfaces and their transformations, and more recent results from geometric theory of integrable discrete equations. We first present  transformations of hyperbolic surfaces within the context of the Moutard equation and  Weingarten congruences. The permutability property of the transformations provides a way to construct integrable discrete analogs of the asymptotic nets for such surfaces. Then after presenting the theory of conjugate nets and their transformations we apply the principle that  B\"{a}cklund transformations provide integrable discretization to obtain known results on the discrete conjugate nets. The same approach gives, via the Ribaucour transformations, discrete integrable analogs of orthogonal conjugate nets. 

\end{abstract}

\label{firstpage}

%%%% The Article text starts here

\section{Introduction}
Given integrable system of differential equations one is often interested in finding the corresponding (in the sense of small lattice step size limit) discrete system while preserving the integrability properties. It turns out that usually the simple/naive replacement of derivatives by difference operators spoils the integrability. The discretization has to be made on the level where the integrability features are visible and transparent. Such methods are, for example (i) discrete version of the linear problem~\cite{Abl-Lad}, (ii) the Hirota method via a bilinear form~\cite{Hirota-KdV}, (iii)~extensions of the Zakharov--Shabat dressing method~\cite{LeviPilloniSantini}, (iv)~direct linearization using linear integral equations~\cite{NijhoffQuispelCapel}.

Another technique, which is the subject of the present work,  is based on B\"{a}cklund transformations, which are discrete symmetries of integrable equations. It is the fundamental observation made by Decio Levi, which we present quoting abstracts of two of his papers:

\emph{It is shown that any B\"{a}cklund transformation
	of a nonlinear differential equation integrable by the multichannel Schr\"{o}dinger eigenvalue problem can be written in the form $V_{x}= U^\prime V - VU$. This allows us to interpret the B\"{a}cklund
	transformation formally as a nonlinear differential difference
	equation for which we can immediately construct the soliton
	solutions.} \cite{LeBen}

\emph{In this paper, one shows that the best known nonlinear differential difference
	equations associated with the discrete Schr\"{o}dinger spectral problem and also with the
	discrete Zakharov-Shabat spectral problem can be interpreted as B\"{a}cklund transformations for some continuous nonlinear evolution equations.} \cite{Levi}

B\"{a}cklund transformations arose in connection with the construction, by XIXth century geometers~\cite{Bianchi-hab,Backlund}, of pseudospherical surfaces and corresponding solutions of the sine-Gordon equation. It was shown by Bianchi~\cite{Bianchi-permut} that such transformations can be iterated leading to an algebraic superposition formula.
More recently, Wahlquist and Estabrook~\cite{WahlquistEstabrook} demonstrated that also the Korteweg--de~Vries equation, which is a paradigmatic example of integrable partial differential equation~\cite{GGKM}, admits invariance under a B\"{a}cklund-type transformation and possesses an associated
permutability theorem. They have used for that purpose the transformation introduced by Darboux~\cite{Darboux-transf} in the context of Sturm--Liouville problems. The subject is generally known in the soliton theory as B\"{a}cklund or Darboux transformations~\cite{Lamb,RogersShadwick,ms,RogersSchief, GuHuZhou,Ferapontov}, but in the geometric theory of transformations of surfaces exhibiting permutability property, also other names are relevant~\cite{Demoulin-R, Ribaucour, Jonas}. The classical results  of the old differential geometry of surfaces and their transformations are summarized in \cite{DarbouxIV,Darboux-OS,Bianchi,Eisenhart-TS,Tzitzeica,Lane,Finikov}.

The most general transformations of conjugate nets and their permutability were introduced and studied by Jonas~\cite{Jonas,Eisenhart-TS}. The Darboux equations of multidimensional conjugate nets~\cite{Darboux-OS} have been rediscovered by  Zakharov and Manakov~\cite{ZaMa1,ZaMa2} as the most general systems solvable by the non-local $\bar\partial$-dressing method. The discrete analogue of conjugate nets on a surface was introduced first on the geometric level~\cite{Sauer2,Sauer}, and connected to integrability theory in~\cite{DCN}. The integrable discrete version of the corresponding Darboux system was given in~\cite{BoKo}. Geometric studies of multidimensional discrete conjugate nets have been initiated in~\cite{MQL} and were followed in~\cite{MDS,DMS,TQL}. The integrability of circular lattices, which form a distinguished reduction of discrete conjugate nets and in the continuous limit give orthogonal conjugate nets, was first studied geometrically in \cite{CDS} and then confirmed by other tools \cite{DMS,KoSchief2,AKV,DMM}. Compelling reasons for such an interpretation were given in \cite{2dcl1,2dcl2} on the basis of the computer graphics, and in \cite{BP2,Bobenko-O} from the theory of discrete isothermic nets.

I met first time Decio Levi in late eighties when he came to Institute of Theoretical Physics of Warsaw University to visit his friend Antoni Sym, who supervised both my master and PhD theses. They worked together on integrable generalization of pseudospherical surfaces, known now as Bianchi surfaces~\cite{LeviSym-B}. In the present work we show a harmonious coexistence of two points of view: \emph{B\"{a}cklund transformations provide integrable discretization} by Decio Levi, and \emph{soliton theory is surface theory} \cite{Sym} by Antoni Sym. In the theory of integrable systems it is quite common  that the same equations can be derived using different methods, which give different perspective and emphasize different connections.

The paper is constructed as follows. In Section~\ref{sec:dis-ass} we present the classical theory of hyperbolic surfaces in asymptotic parametrization emphasizing their connection with the Moutard equation~\cite{Moutard}. The corresponding transformations and their permutability property~\cite{NiSchief} give rise to discrete asymptotic nets, which coincide with their natural geometric analogs~\cite{Sauer}. The next Section~\ref{sec:conj}  is devoted to conjugate nets and their transformations. We give a discretization of the nets starting from their fundamental transformation and exploiting its permutability properties. In Section~\ref{sec:circ} we present derivation of integrable discrete version of orthogonal conjugate nets on the base of geometric interpretation of the Ribaucour reduction~\cite{Ribaucour,Demoulin-R} of the fundamental  transformation. We conclude the paper with additional discussion on geometry of Hirota's discrete Kadomtsev--Petviashvili (KP) system and with some remarks about dispersionless systems.

In the paper we tried to present old results, most of them more than one hundred years old, from contemporary perspective and in unified notation. We remark that  interpretation of the fundamental transformation and its Ribaucour reduction in terms of vertex operators within the free-fermion formalism of the multicomponent KP hierarchy was the subject of \cite{DMMMS,DMM}, where also the aspects of integrable discretization of conjugate and orthogonal nets were investigated.

\section{Discretization of asymptotic nets, and the Moutard \\ transformation} \label{sec:dis-ass}
The classical transformation of Bianchi and B\"{a}cklund for pseudospherical surfaces and the sine-Gordon equation can be considered as a reduction of Weingarten transformation of hyperbolic surfaces in asymptotic parametrization. We devote the present Section to such asymptotic nets and to the Moutard equation~\cite{Moutard}, which governs the behaviour of their normal vector. We use the standard notation and terminology of the classical theory of surfaces~\cite{Eisenhart-TCS}, see also~\cite{RogersSchief}.

\subsection{Hyperbolic surfaces, the Moutard equation, and the Lelieuvre\\ formulas}
Let $(u,v)$ be local coordinate system on a surface $\Sigma$ in $\RR^3$, and by $\br(u,v)$ denote the position vector of a generic point. The coordinate lines are called asymptotic when in every point their tangent planes coincide with the tangent plane to the surface. Surfaces admitting the asymptotic coordinates are called hyperbolic. In such case we have
\begin{align} \label{eq:r-uu}
\br_{,uu} & = a_1 \br_{,u} + b_1 \br_{,v}\; , \\
\label{eq:r-vv}
\br_{,vv} & = a_2 \br_{,u} + b_2 \br_{,v}\; ,
\end{align}
where $a_i$, and $b_i$, $i=1,2$, are functions of the local coordinates; here and in all the paper by a subscript after comma we denote the partial derivative with respect to the corresponding variable.
As a consequence of the compatibility condition $\br_{,uuvv} = \br_{,vvuu}$ there exists a function $\phi$, given up to an additive constant, such that
\begin{equation} \label{eq:ab-phi}
a_1 = \phi_{,u}\;, \qquad b_2 = \phi_{,v}\; .
\end{equation}
By direct calculation one can check that
the normal vector 
\begin{equation} \label{eq:bnu}
\bnu = \mathrm{e}^{-\phi} \br_{,u} \times \br_{,v}
\end{equation}
satisfies the Moutard equation
\begin{equation} \label{eq:M-uv}
\bnu_{,uv} = f \bnu \; , 
\end{equation}
with the potential
\begin{equation}
f = \phi_{,uv} - b_1 a_2 \; .
\end{equation}
Moreover, using eventually the allowed freedom in definition of $\phi$ and/or changing the orientation, the position vector is given by the Lelieuvre formulas~\cite{Lelieuvre}
\begin{equation} \label{eq:L-r-u-v}
\br_{,u}  = \bnu_{,u} \times \bnu \; , \qquad
\br_{,v}  = \bnu \times \bnu_{,v} \; .
\end{equation}

\subsection{The Moutard transformation and its permutability property}
The following result provides a way how to transform solutions of a Moutard equation into new solution of equation of the same form but with different potential. In consequence, given a hyperbolic surface it allows to construct new surface of the same type.
\begin{Th}{\cite{Moutard}}
	Given vector-valued solution $\bnu$ of the Moutard equation corresponding to a given potential $f$, and given scalar solution $\theta$ of the same equation, then the vector-valued function $\hat\bnu$ defined by the compatible equations
	\begin{align} \label{eq:MT-u}
	(\theta\hat\bnu)_{,u} & = \;\; \theta_{,u} \bnu - \theta \bnu_{,u} \; ,\\ \label{eq:MT-v}
	(\theta\hat\bnu)_{,v} & = - \theta_{,v} \bnu + \theta \bnu_{,v} \; ,
	\end{align}
	satisfies the Moutard equation with the potential
	\begin{equation*}
	\hat{f} = \frac{\hat\theta_{,uv}}{\hat\theta} \; , \qquad \text{where} \qquad  \hat\theta = \frac{1}{\theta} \; .
	\end{equation*}
\end{Th}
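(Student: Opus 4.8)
The plan is to take $w = \theta\hat\bnu$ as the primary unknown and proceed in two stages: first confirm that the defining equations \eqref{eq:MT-u}--\eqref{eq:MT-v} are compatible, so that $\hat\bnu$ is genuinely well defined, and then verify by direct differentiation that $\hat\bnu$ solves the Moutard equation with the stated potential. Throughout I would use the two hypotheses $\bnu_{,uv} = f\bnu$ and $\theta_{,uv} = f\theta$ (the latter being what it means for $\theta$ to solve the same equation).

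For compatibility I would compute $w_{,uv}$ from \eqref{eq:MT-u} and $w_{,vu}$ from \eqref{eq:MT-v} and check that they agree. Differentiating the right-hand side of \eqref{eq:MT-u} with respect to $v$ produces $\theta_{,uv}\bnu + \theta_{,u}\bnu_{,v} - \theta_{,v}\bnu_{,u} - \theta\bnu_{,uv}$, and here is the one place where the hypotheses enter essentially: substituting $\theta_{,uv} = f\theta$ and $\bnu_{,uv} = f\bnu$ makes the two potential terms $f\theta\bnu$ cancel, leaving $\theta_{,u}\bnu_{,v} - \theta_{,v}\bnu_{,u}$. The parallel computation of $w_{,vu}$ from \eqref{eq:MT-v} yields the same expression, so $w_{,uv} = w_{,vu}$ and the system is integrable.

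Next I would extract the first derivatives of $\hat\bnu = w/\theta$. Writing $p = \theta_{,u}/\theta$ and $q = \theta_{,v}/\theta$ for the logarithmic derivatives, the product rule applied to $\theta\hat\bnu = w$ gives $\hat\bnu_{,u} = p\bnu - \bnu_{,u} - p\hat\bnu$ and $\hat\bnu_{,v} = -q\bnu + \bnu_{,v} - q\hat\bnu$. The crucial auxiliary identity is $p_{,v} = \theta_{,uv}/\theta - \theta_{,u}\theta_{,v}/\theta^2 = f - pq$, again using the Moutard equation for $\theta$. Differentiating the expression for $\hat\bnu_{,u}$ with respect to $v$, then inserting $\bnu_{,uv} = f\bnu$ together with the formula for $\hat\bnu_{,v}$, I expect the contributions along $\bnu$ and along $\bnu_{,v}$ to cancel in pairs, leaving $\hat\bnu_{,uv} = (2pq - f)\hat\bnu$.

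The only real obstacle is precisely this cancellation: one must confirm that every term proportional to $\bnu$ and to $\bnu_{,v}$ disappears, so that the result is genuinely a \emph{scalar} multiple of $\hat\bnu$ rather than a more general second-order expression — this is what makes $\hat\bnu$ satisfy an equation of Moutard type. It then remains only to identify the coefficient. Computing directly with $\hat\theta = 1/\theta$ gives $\hat\theta_{,uv} = -\theta_{,uv}/\theta^2 + 2\theta_{,u}\theta_{,v}/\theta^3$, whence $\hat\theta_{,uv}/\hat\theta = -\theta_{,uv}/\theta + 2\theta_{,u}\theta_{,v}/\theta^2 = 2pq - f$, which matches the coefficient found above and completes the proof.
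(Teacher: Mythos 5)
Your proposal is correct, and in fact the paper offers no proof of this statement at all --- it is quoted as a classical result with a citation to Moutard's original memoir --- so your direct verification supplies an argument the paper omits. I checked the computation: the compatibility of \eqref{eq:MT-u}--\eqref{eq:MT-v} does reduce to the cancellation of the two $f\theta\bnu$ terms exactly as you describe; with $p=\theta_{,u}/\theta$, $q=\theta_{,v}/\theta$ and $p_{,v}=f-pq$, the expansion of $\hat\bnu_{,uv}$ gives coefficient $(f-pq)-f+pq=0$ along $\bnu$, coefficient $p-p=0$ along $\bnu_{,v}$, and coefficient $-(f-pq)+pq=2pq-f$ along $\hat\bnu$; and $\hat\theta_{,uv}/\hat\theta=2pq-f$ matches. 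The one place where your write-up hedges (``I expect the contributions \dots to cancel in pairs'') is precisely the heart of the verification, so in a final version you should display that term-by-term bookkeeping explicitly rather than assert it, but the claim itself is true and everything else in your argument is sound.
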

\begin{Cor} \label{cor:M-1}
	Notice that $\bnu$ is the Moutard transform of $\hat\bnu$ with $\hat\theta$ taken as the transformation function.
\end{Cor}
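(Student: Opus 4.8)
The plan is to show directly that $\bnu$ satisfies the defining relations of the Moutard transform of $\hat\bnu$ when $\hat\theta=1/\theta$ is taken as the transformation function. First I would check that $\hat\theta$ is a legitimate scalar solution of the Moutard equation governing $\hat\bnu$: the Theorem asserts that $\hat\bnu$ solves that equation with potential $\hat f = \hat\theta_{,uv}/\hat\theta$, and this identity read as $\hat\theta_{,uv}=\hat f\,\hat\theta$ is precisely the statement that $\hat\theta$ itself is a scalar solution of the same equation. Hence $\hat\theta$ qualifies as a transformation function, and applying the Moutard transformation to the pair $(\hat\bnu,\hat\theta)$ produces a vector $\bnu'$ determined, up to an additive integration constant, by
\begin{align*}
(\hat\theta\,\bnu')_{,u} & = \;\; \hat\theta_{,u}\hat\bnu - \hat\theta\hat\bnu_{,u}\;, \\
(\hat\theta\,\bnu')_{,v} & = -\hat\theta_{,v}\hat\bnu + \hat\theta\hat\bnu_{,v}\;.
\end{align*}
The goal is then to verify that $\bnu'=\bnu$ satisfies both of these.

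The key computational observation is that the right-hand sides of \eqref{eq:MT-u}--\eqref{eq:MT-v} can be rewritten in the compact form $\theta_{,u}\bnu-\theta\bnu_{,u}=-\theta^2(\bnu/\theta)_{,u}$ and $-\theta_{,v}\bnu+\theta\bnu_{,v}=\theta^2(\bnu/\theta)_{,v}$, which makes transparent the symmetry of the construction under the involution $(\bnu,\theta)\leftrightarrow(\hat\bnu,\hat\theta)$. Substituting $\bnu'=\bnu$ and $\hat\theta=1/\theta$, I would expand the right-hand side of the $u$-relation above by the same Leibniz identity (now with hats) as $-\hat\theta^{2}(\hat\bnu/\hat\theta)_{,u}=-\theta^{-2}(\theta\hat\bnu)_{,u}$, and then insert the original relation \eqref{eq:MT-u} to replace $(\theta\hat\bnu)_{,u}$ by $\theta_{,u}\bnu-\theta\bnu_{,u}$. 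The factors of $\theta^{2}$ cancel and the expression collapses to $(\bnu/\theta)_{,u}=(\hat\theta\,\bnu)_{,u}$, which is exactly the left-hand side. The $v$-relation follows in the same way from \eqref{eq:MT-v}, the opposite sign being absorbed consistently.

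I do not expect a genuine obstacle here, since the statement is an involutivity property established by a short direct calculation; the only points requiring care are the bookkeeping of signs and of the powers of $\theta$, together with the remark that the derivative relations fix $\bnu'$ only up to an additive constant vector $\theta\,\boldsymbol{c}$. Choosing this constant to vanish returns $\bnu'=\bnu$ exactly, which proves the Corollary; this freedom is nothing but the additive ambiguity already present in the Theorem and is harmless for the geometric statement.
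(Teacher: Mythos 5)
Your proposal is correct and follows exactly the direct verification that the paper leaves implicit behind its ``Notice that'': rewriting the right-hand sides of \eqref{eq:MT-u}--\eqref{eq:MT-v} as $\mp\theta^{2}(\bnu/\theta)_{,u\,\text{or}\,v}$ makes the involution $(\bnu,\theta)\leftrightarrow(\hat\bnu,\hat\theta)$ manifest, and your sign and $\theta$-power bookkeeping checks out in both relations. The remarks that $\hat\theta$ is indeed a scalar solution of the transformed Moutard equation (since $\hat f=\hat\theta_{,uv}/\hat\theta$) and that the additive constant is fixed by the choice $\bnu'=\bnu$ are exactly the points one needs.
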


\begin{Cor} \label{cor:hat-r}
	One can check that the surface 
	\begin{equation} \label{eq:hat-r}
	\hat\br = \br + \hat\bnu \times \bnu,
	\end{equation}
	can be obtained from $\hat\bnu$ via the Lelieuvre formulas~\eqref{eq:L-r-u-v}. In particular the local parameters $(u,v)$ form an asymptotic coordinate system on the transformed surface.
\end{Cor}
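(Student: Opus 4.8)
The plan is to verify directly that the transformed data $(\hat\br,\hat\bnu)$ satisfy the Lelieuvre formulas \eqref{eq:L-r-u-v}, i.e.\ $\hat\br_{,u}=\hat\bnu_{,u}\times\hat\bnu$ and $\hat\br_{,v}=\hat\bnu\times\hat\bnu_{,v}$, and then to read off the asymptotic property from the Moutard equation already established for $\hat\bnu$. First I would differentiate the definition \eqref{eq:hat-r} and substitute the original Lelieuvre formulas for $\br_{,u}$ and $\br_{,v}$. In the $u$-direction this gives $\hat\br_{,u}=\bnu_{,u}\times\bnu+\hat\bnu_{,u}\times\bnu+\hat\bnu\times\bnu_{,u}$, and the desired identity $\hat\br_{,u}=\hat\bnu_{,u}\times\hat\bnu$ reduces, after collecting the cross products, to the single collinearity condition $(\hat\bnu_{,u}+\bnu_{,u})\times(\hat\bnu-\bnu)=0$.

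The key step is to extract this collinearity from the Moutard transformation. Expanding the product rule in \eqref{eq:MT-u} and solving for $\hat\bnu_{,u}$ gives $\hat\bnu_{,u}+\bnu_{,u}=-(\theta_{,u}/\theta)(\hat\bnu-\bnu)$, so the two factors in the cross product are manifestly parallel and it vanishes. The $v$-direction is entirely symmetric: the analogous computation reduces to $(\hat\bnu+\bnu)\times(\hat\bnu_{,v}-\bnu_{,v})=0$, and \eqref{eq:MT-v} yields $\hat\bnu_{,v}-\bnu_{,v}=-(\theta_{,v}/\theta)(\hat\bnu+\bnu)$, again forcing the cross product to zero. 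This establishes that $\hat\br$ arises from $\hat\bnu$ through the Lelieuvre formulas.

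For the asymptotic claim I would use that $\hat\bnu$ solves the Moutard equation $\hat\bnu_{,uv}=\hat{f}\hat\bnu$ by the preceding theorem. Differentiating $\hat\br_{,u}=\hat\bnu_{,u}\times\hat\bnu$ once more in $u$ gives $\hat\br_{,uu}=\hat\bnu_{,uu}\times\hat\bnu$, which is orthogonal to $\hat\bnu$; likewise $\hat\br_{,vv}=\hat\bnu\times\hat\bnu_{,vv}$ is orthogonal to $\hat\bnu$. Since $\hat\br_{,u}$ and $\hat\br_{,v}$ are themselves orthogonal to $\hat\bnu$, the vector $\hat\bnu$ spans the surface normal direction, so the pure second derivatives $\hat\br_{,uu}$ and $\hat\br_{,vv}$ have no normal component and lie in the tangent plane. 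This is precisely the defining property \eqref{eq:r-uu}--\eqref{eq:r-vv} of asymptotic coordinates on the transformed surface.

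I expect the only real obstacle to be bookkeeping in the cross-product algebra; once one notices that the combinations $\hat\bnu_{,u}+\bnu_{,u}$ and $\hat\bnu_{,v}-\bnu_{,v}$ collapse to multiples of $\hat\bnu\mp\bnu$ via \eqref{eq:MT-u}--\eqref{eq:MT-v}, everything is routine. A minor preliminary check is the well-definedness of $\hat\br$, i.e.\ the compatibility $\hat\br_{,uv}=\hat\br_{,vu}$, which also follows from the Moutard equation \eqref{eq:M-uv} for $\hat\bnu$ since the $\hat{f}\,\hat\bnu\times\hat\bnu$ terms cancel, leaving $\hat\br_{,uv}=\hat\bnu_{,u}\times\hat\bnu_{,v}$ symmetrically from both formulas.
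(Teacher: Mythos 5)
Your verification is correct and is exactly the direct computation the paper leaves to the reader (the corollary is stated with ``one can check'' and no written proof): the key identities $\hat\bnu_{,u}+\bnu_{,u}=-(\theta_{,u}/\theta)(\hat\bnu-\bnu)$ and $\hat\bnu_{,v}-\bnu_{,v}=-(\theta_{,v}/\theta)(\hat\bnu+\bnu)$ extracted from \eqref{eq:MT-u}--\eqref{eq:MT-v} do make the cross products $(\hat\bnu_{,u}+\bnu_{,u})\times(\hat\bnu-\bnu)$ and $(\hat\bnu+\bnu)\times(\hat\bnu_{,v}-\bnu_{,v})$ vanish, and the asymptotic property then follows from the Moutard equation for $\hat\bnu$ as you describe. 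No gaps; the only cosmetic remark is that the compatibility check $\hat\br_{,uv}=\hat\br_{,vu}$ is automatic here since $\hat\br$ is defined by the explicit formula \eqref{eq:hat-r} rather than by integrating the Lelieuvre system.
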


\begin{Rem}
	The two-parameter family of lines $\langle \br, \hat\br \rangle$ , which are tangent to both surfaces $\Sigma$ and $\hat\Sigma$ at the corresponding points, forms the so called Weingarten congruence. The corresponding transformation between hyperbolic surfaces $\Sigma$ and $\hat\Sigma$ is called the Weingarten transformation.
\end{Rem}

Let $\theta^{i}$, $i=1,2$, be two solutions of the Moutard equation satisfied by the normal $\bnu$, and let $\bnu^{\{i\}}$, $i=1,2$, be its corresponding two transforms. By $\theta^{2\{1\}}$ denote also the Moutard transform of $\theta^2$ with respect to $\theta^1$ 
\begin{equation} \label{eq:MS-1}
\begin{pmatrix}  \theta^1 \begin{pmatrix} \bnu^{\{1\}} \\ \theta^{2\{1\}} \end{pmatrix}\end{pmatrix}_{,u} = 
\theta^1_{,u} \begin{pmatrix} \bnu \\ \theta^{2} \end{pmatrix} - \theta^1 \begin{pmatrix} \bnu \\ \theta^{2} \end{pmatrix}_{,u},\qquad
\begin{pmatrix}  \theta^1 \begin{pmatrix} \bnu^{\{1\}} \\ \theta^{2\{1\}} \end{pmatrix}\end{pmatrix}_{,v} =
- \theta^1_{,v} \begin{pmatrix} \bnu \\ \theta^{2} \end{pmatrix} + \theta^1 \begin{pmatrix} \bnu \\ \theta^{2} \end{pmatrix}_{,v},
\end{equation}
and by $\theta^{1\{2\}}$ denote the Moutard transform of $\theta^1$ with respect to $\theta^2$
\begin{equation} \label{eq:MS-2}
\begin{pmatrix}  \theta^2 \begin{pmatrix} \bnu^{\{2\}} \\ \theta^{1\{2\}} \end{pmatrix}\end{pmatrix}_{,u} =
\theta^2_{,u} \begin{pmatrix} \bnu \\ \theta^{1} \end{pmatrix} - \theta^2 \begin{pmatrix} \bnu \\ \theta^{1} \end{pmatrix}_{,u},\qquad
\begin{pmatrix}  \theta^2 \begin{pmatrix} \bnu^{\{2\}} \\ \theta^{1\{2\}} \end{pmatrix}\end{pmatrix}_{,v} = -
\theta^2_{,v} \begin{pmatrix} \bnu \\ \theta^{1} \end{pmatrix} + \theta^2 \begin{pmatrix} \bnu \\ \theta^{1} \end{pmatrix}_{,v}.
\end{equation}
In consequence, the transformation formulas \eqref{eq:hat-r} on the hyperbolic surfaces level $\Sigma^{\{1\}}$ and $\Sigma^{\{2\}}$ take the form
\begin{equation} \label{eq:hat-r-1-2}
\br^{\{1\}}  = \br + \bnu^{\{1\}} \times \bnu,\qquad
\br^{\{2\}}  = \br  + \bnu^{\{2\}} \times \bnu.
\end{equation}

Let us apply to $\bnu^{\{1\}}$ the Moutard transformation with $\theta^{2\{1\}}$,
\begin{equation} \label{eq:MS-12}
\left(  \theta^{2\{1\}} \bnu^{\{1, 2\}}\right)_{,u} =
\theta^{2\{1\}}_{,u} \bnu^{\{1\}} - \theta^{2\{1\}}  \bnu^{\{1\}}_{,u},\qquad
\left(  \theta^{2\{1\}} \bnu^{\{1, 2\}}\right)_{,v} =
- \theta^{2\{1\}}_{,v} \bnu^{\{1\}} + \theta^{2\{1\}}  \bnu^{\{1\}}_{,v},
\end{equation}
and to $\bnu^{\{2\}}$ the transformation with $\theta^{1\{2\}}$
\begin{equation} \label{eq:MS-21}
\left(  \theta^{1\{2\}} \bnu^{\{2,1 \}}\right)_{,u} =
\theta^{1\{2\}}_{,u} \bnu^{\{2\}} - \theta^{1\{2\}}  \bnu^{\{2\}}_{,u},\qquad
\left(  \theta^{1\{2\}} \bnu^{\{2, 1\}}\right)_{,v} =
- \theta^{1\{2\}}_{,v} \bnu^{\{2\}} + \theta^{1\{2\}}  \bnu^{\{2\}}_{,v}.
\end{equation}
We would like both transformations give the same result $\bnu^{\{2,1 \}} = \bnu^{\{1,2 \}}$. Moreover since we have
\begin{equation*}
(\theta^1 \theta^{2\{1\}} + \theta^2 \theta^{1\{2\}})_{,u} =0 =
(\theta^1 \theta^{2\{1\}} + \theta^2 \theta^{1\{2\}})_{,v} ,
\end{equation*}
then the additive constants in definition of $\theta^1 \theta^{2\{1\}}$ and of $\theta^2 \theta^{1\{2\}}$ can be fixed such that
\begin{equation} \label{eq:MS-theta}
\theta^2 \theta^{1\{2\}}= \theta^{12} = -\theta^1 \theta^{2\{1\}} \; .
\end{equation}
By elimination of derivatives of the normal vectors from equations \eqref{eq:MS-1}-\eqref{eq:MS-2} and  \eqref{eq:MS-12}-\eqref{eq:MS-21}, and using identity \eqref{eq:MS-theta}  we can arrive to the the following important result. 
\begin{Th}[Permutability of the Moutard transformations]
	The vector-valued functions $\bnu^{\{12\}}$ given by algebraic formula
	\begin{equation} \label{eq:M-12}
	\bnu^{\{12\}} - \bnu = \frac{\theta^1 \theta^2}{\theta^{12}}\left( \bnu^{\{1\}} - \bnu^{\{2\}}\right)
	\end{equation}
	is simultaneously Moutard transform of $\bnu^{\{1\}}$ with respect to $\theta^{2\{1\}}$ and the Moutard transform of $\bnu^{\{2\}}$ with respect to $\theta^{1\{2\}}$.  
	\begin{equation*}
	\begin{CD}
	\bnu @>{\theta^1}>> \bnu^{\{1\}} \\
	@V{\theta^2}VV    @VV{\theta^{2\{1\}}}V  \\
	\bnu^{\{2\}} @>{\theta^{1\{2\}}}>>  \bnu^{\{12\}}.
	\end{CD}
	\end{equation*} 
\end{Th}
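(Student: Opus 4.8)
The plan is to treat the algebraic formula \eqref{eq:M-12} as an ansatz for $\bnu^{\{12\}}$ and to verify directly that it solves the defining transform equations \eqref{eq:MS-12} and \eqref{eq:MS-21}, so that the required agreement $\bnu^{\{1,2\}}=\bnu^{\{2,1\}}$ is built into the construction rather than proved afterwards. The first move is to exploit the normalization \eqref{eq:MS-theta}. Writing $\theta^{12}=-\theta^1\theta^{2\{1\}}$ turns \eqref{eq:M-12} into $\theta^{2\{1\}}(\bnu^{\{12\}}-\bnu)=-\theta^2(\bnu^{\{1\}}-\bnu^{\{2\}})$, while writing $\theta^{12}=\theta^2\theta^{1\{2\}}$ turns it into $\theta^{1\{2\}}(\bnu^{\{12\}}-\bnu)=\theta^1(\bnu^{\{1\}}-\bnu^{\{2\}})$. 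These two forms are exact images of one another under the interchange $1\leftrightarrow 2$, so once one of the two transform relations is checked the other follows by the same relabelling; this is what allows the single formula \eqref{eq:M-12} to realize both transformations reaching $\bnu^{\{12\}}$ in the diagram simultaneously.

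To verify \eqref{eq:MS-12} I would start from $\theta^{2\{1\}}\bnu^{\{12\}}=\theta^{2\{1\}}\bnu-\theta^2\bnu^{\{1\}}+\theta^2\bnu^{\{2\}}$, differentiate with respect to $u$, and replace every derivative that appears using the relations already in hand: the vector part of \eqref{eq:MT-u} for the first transform, $(\theta^1\bnu^{\{1\}})_{,u}=\theta^1_{,u}\bnu-\theta^1\bnu_{,u}$, the corresponding relation for the second transform, and the Leibniz rule. After collecting terms the whole expression factors as a scalar multiple of $(\bnu-\bnu^{\{1\}})$, and the claim reduces to the vanishing of the scalar coefficient
\begin{equation*}
\theta^{2\{1\}}_{,u}+\theta^2_{,u}+(\theta^{2\{1\}}-\theta^2)\frac{\theta^1_{,u}}{\theta^1}=0 .
\end{equation*}
Multiplying through by $\theta^1$ and recognizing $\theta^1\theta^{2\{1\}}_{,u}+\theta^{2\{1\}}\theta^1_{,u}=(\theta^1\theta^{2\{1\}})_{,u}$, this is precisely the scalar component of \eqref{eq:MS-1}, so it holds identically. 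The $v$-derivative is handled in the same way, with the expected sign changes, and relies on the $v$-component of \eqref{eq:MS-1}.

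The genuine content of the argument is therefore not the vector algebra but the consistency of the scalar bookkeeping: the vector identity one must prove collapses, after eliminating $\bnu_{,u}$ and $\bnu^{\{1\}}_{,u}$, exactly onto the scalar Moutard-transform equation for $\theta^{2\{1\}}$, which is the reason \eqref{eq:MS-1} was arranged to propagate $\bnu$ and $\theta^2$ together. The step I expect to need the most care — and the one that silently carries the whole result — is the fixing of the integration constant in \eqref{eq:MS-theta}: the closure relation $\theta^2\theta^{1\{2\}}=-\theta^1\theta^{2\{1\}}$ is what permits the two equivalent forms of the coefficient in \eqref{eq:M-12}, and hence the symmetry $\bnu^{\{12\}}=\bnu^{\{21\}}$; without it the two transforms would generically produce different vectors and permutability would fail. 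Once \eqref{eq:MS-theta} is imposed, verifying \eqref{eq:MS-21} is literally the $1\leftrightarrow 2$ relabelling of the computation above, now invoking the scalar part of \eqref{eq:MS-2}, and the proof is complete.
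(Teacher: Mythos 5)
Your proposal is correct and follows essentially the same route as the paper: the paper obtains \eqref{eq:M-12} by eliminating the derivatives of the normal vectors from \eqref{eq:MS-1}--\eqref{eq:MS-2} and \eqref{eq:MS-12}--\eqref{eq:MS-21} together with the normalization \eqref{eq:MS-theta}, while you run the identical algebra in the reverse direction, verifying that the ansatz \eqref{eq:M-12} satisfies \eqref{eq:MS-12} and then obtaining \eqref{eq:MS-21} by the $1\leftrightarrow 2$ relabelling permitted by the two equivalent factorizations of $\theta^{12}$. Your reduction of the vector identity to the scalar component of \eqref{eq:MS-1}, and the role you assign to the fixing of the integration constant in \eqref{eq:MS-theta}, match the paper's (much terser) argument exactly.
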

\begin{Rem}
	Actually, because of free additive parameter in definition of $\theta^{12}$ we obtain this way one-parameter family of the transforms.
\end{Rem}
\begin{Cor}
	The one-parameter family of vector-valued functions $\br^{\{12\}}$ given by algebraic formula
	\begin{equation} \label{eq:r-12}
	\br^{\{12\}} = \br + \frac{\theta^1 \theta^2}{\theta^{12}} \bnu^{\{1\}}\times \bnu^{\{2\}}
	\end{equation}
	provides simultaneously the Weingarten transforms of the hyperbolic surface $\br^{\{1\}}$ with respect to $\theta^{2\{1\}}$ and the Weingarten transforms of the hyperbolic surface $\br^{\{2\}}$ with respect to $\theta^{1\{2\}}$. 
\end{Cor}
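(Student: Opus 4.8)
The plan is to reduce everything to the Weingarten formula of Corollary~\ref{cor:hat-r} together with the already-established superposition rule \eqref{eq:M-12} for the normals, so that no new differential equation needs to be integrated. Recall from Corollary~\ref{cor:hat-r} that the position vector of a Weingarten transform is produced from the transformed normal by $\hat\br = \br + \hat\bnu\times\bnu$. Applying this at the vertex $\Sigma^{\{1\}}$ of the permutability square, with $\bnu^{\{12\}}$ playing the role of the transformed normal and $\theta^{2\{1\}}$ as transformation datum, the Weingarten transform of $\br^{\{1\}}$ must be
\begin{equation*}
\br^{\{1\}} + \bnu^{\{12\}}\times\bnu^{\{1\}},
\end{equation*}
and symmetrically the transform of $\br^{\{2\}}$ with respect to $\theta^{1\{2\}}$ must be $\br^{\{2\}} + \bnu^{\{12\}}\times\bnu^{\{2\}}$. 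The content of the corollary is that both of these coincide with \eqref{eq:r-12}.

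The first step is therefore to insert the permutability formula \eqref{eq:M-12}, written as $\bnu^{\{12\}} = \bnu + \frac{\theta^1\theta^2}{\theta^{12}}(\bnu^{\{1\}}-\bnu^{\{2\}})$, together with the first-generation formula \eqref{eq:hat-r-1-2}, into the expression $\br^{\{1\}} + \bnu^{\{12\}}\times\bnu^{\{1\}}$. The second step is a short cross-product computation: the contribution $\bnu^{\{1\}}\times\bnu$ coming from $\br^{\{1\}}$ cancels $\bnu\times\bnu^{\{1\}}$ by antisymmetry, the self-product $\bnu^{\{1\}}\times\bnu^{\{1\}}$ vanishes, and what survives is precisely $\frac{\theta^1\theta^2}{\theta^{12}}\,\bnu^{\{1\}}\times\bnu^{\{2\}}$, reproducing \eqref{eq:r-12}.

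The third step repeats the same manipulation starting from $\br^{\{2\}} + \bnu^{\{12\}}\times\bnu^{\{2\}}$; here $\bnu^{\{2\}}\times\bnu$ cancels $\bnu\times\bnu^{\{2\}}$, the self-product $\bnu^{\{2\}}\times\bnu^{\{2\}}$ drops out, and one is again left with $\frac{\theta^1\theta^2}{\theta^{12}}\,\bnu^{\{1\}}\times\bnu^{\{2\}}$. Since both routes around the square yield the same vector \eqref{eq:r-12}, the single expression $\br^{\{12\}}$ simultaneously realises both Weingarten transforms, and the one-parameter freedom is inherited from the additive constant in the definition of $\theta^{12}$.

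I do not expect a genuine obstacle, since the normal-level permutability has already absorbed the analytic work; the only point requiring care is the bookkeeping of transformation functions, namely verifying that the Weingarten formula is indeed applied with $\theta^{2\{1\}}$ (respectively $\theta^{1\{2\}}$) at the vertex $\Sigma^{\{1\}}$ (respectively $\Sigma^{\{2\}}$), exactly as in the commuting diagram of the preceding theorem, so that the same $\bnu^{\{12\}}$ enters both cross products and the two expressions are forced to agree.
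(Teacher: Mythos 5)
Your proposal is correct and follows exactly the route the paper intends: the corollary is stated without an explicit proof precisely because it reduces, as you show, to substituting the permutability formula \eqref{eq:M-12} and the first-generation Lelieuvre-type relations \eqref{eq:hat-r-1-2} into the Weingarten formula of Corollary~\ref{cor:hat-r} applied at $\Sigma^{\{1\}}$ and $\Sigma^{\{2\}}$, after which antisymmetry of the cross product does the rest. Your bookkeeping of the transformation functions $\theta^{2\{1\}}$ and $\theta^{1\{2\}}$ and of the one-parameter freedom inherited from $\theta^{12}$ is also accurate.
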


\subsection{The Moutard transformation as integrable discretization}
The successive application of the Weingarten-Moutard transforms, taking into account their algebraic superposition formula, to a given hyperbolic surface $\Sigma$ allows to build a two dimensional lattice $\Sigma^{m,n}$ of such surfaces. Let us fix a point on $\Sigma$ and trace properties of a lattice in $\RR^3$ of the corresponding points, represented by vectors $\br^{m,n}$. 
\begin{Prop} \label{prop:dis-as}
	The five points $\br^{m,n}$, $\br^{m\pm1,n}$, $\br^{m,n\pm 1}$ belong to a common plane.
\end{Prop}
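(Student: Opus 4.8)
The plan is to reduce the coplanarity of the five points to a single orthogonality statement: every edge of the lattice issuing from $\br^{m,n}$ is perpendicular to the normal $\bnu^{m,n}$ attached to that vertex. Granting this, the four displacement vectors $\br^{m\pm 1,n}-\br^{m,n}$ and $\br^{m,n\pm 1}-\br^{m,n}$ all lie in the affine plane through $\br^{m,n}$ orthogonal to $\bnu^{m,n}$, and hence the five points share this common plane. No appeal to the superposition formula~\eqref{eq:M-12} itself is needed; only the Weingarten--Moutard edge formula and its reversal enter.

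First I would record the edge displacements produced by a single Weingarten--Moutard step. The neighbour $\br^{m+1,n}$ is the Weingarten transform of $\br^{m,n}$, so Corollary~\ref{cor:hat-r}, i.e.\ formula~\eqref{eq:hat-r}, gives
\[
\br^{m+1,n}-\br^{m,n} = \bnu^{m+1,n}\times\bnu^{m,n},
\]
and likewise $\br^{m,n+1}-\br^{m,n} = \bnu^{m,n+1}\times\bnu^{m,n}$ in the positive $n$-direction. The backward edges I would handle by reading the same formula from the other end: since $\br^{m,n}$ is the Weingarten transform of $\br^{m-1,n}$, formula~\eqref{eq:hat-r} with base point $(m-1,n)$ reads $\br^{m,n}=\br^{m-1,n}+\bnu^{m,n}\times\bnu^{m-1,n}$, whence
\[
\br^{m-1,n}-\br^{m,n} = \bnu^{m-1,n}\times\bnu^{m,n},
\]
and analogously $\br^{m,n-1}-\br^{m,n}=\bnu^{m,n-1}\times\bnu^{m,n}$. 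Corollary~\ref{cor:M-1}, which asserts that $\bnu^{m,n}$ is the Moutard transform of its $m$-predecessor, is what guarantees this reverse reading is legitimate and orientation-consistent.

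The decisive observation is now immediate: each of the four displacement vectors has the form $(\text{neighbouring normal})\times\bnu^{m,n}$, and therefore each is orthogonal to $\bnu^{m,n}$ by the elementary property of the cross product. Consequently all four neighbours $\br^{m\pm1,n}$ and $\br^{m,n\pm1}$ lie in the plane through $\br^{m,n}$ with normal direction $\bnu^{m,n}$, which proves the claim. This plane is the discrete counterpart of the common tangent plane along asymptotic directions, so the statement also identifies $\Sigma^{m,n}$ as a discrete asymptotic (Sauer) net.

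The only point requiring genuine care — and where I expect the bookkeeping to be most delicate — is the sign and orientation convention for the two backward edges: one must keep $\bnu^{m,n}$, rather than the neighbour's normal, in the role of the fixed normal at the central vertex when rewriting each transform. Once all four displacements are expressed uniformly as cross products having $\bnu^{m,n}$ as a common factor, the perpendicularity, and hence the coplanarity, follows with no further computation.
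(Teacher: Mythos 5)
Your proposal is correct and follows essentially the same route as the paper: the author likewise deduces from the discrete Lelieuvre formulas~\eqref{eq:hat-r-1-2} that each edge through $\br^{m,n}$ is a cross product of two adjacent normals, hence orthogonal to $\bnu^{m,n}$, so the five points lie in the plane through $\br^{m,n}$ normal to $\bnu^{m,n}$. The only cosmetic difference is that the paper dispenses with your explicit reversal step by noting at once that each edge is orthogonal to \emph{both} normals entering the cross product, which covers the backward edges automatically.
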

\begin{proof}
	The statement follows from equations \eqref{eq:hat-r-1-2}, which imply that lines $\langle \br^{m,n}, \br^{m+1,n} \rangle$ are orthogonal to both the vectors $\bnu^{m,n}$ and $\bnu^{m+1,n}$, and the lines $\langle \br^{m,n}, \br^{m,n+1} \rangle$ are orthogonal to both the vectors $\bnu^{m,n}$ and $\bnu^{m,n+1}$.
	The common plane of the five points is orthogonal to the vector $\bnu^{m,n}$.
\end{proof}
\begin{Cor}
	Equations \eqref{eq:hat-r-1-2} are discrete analogs of the Lelieuvre formulas \eqref{eq:L-r-u-v}, see also~\cite{Kon-Pin}.
\end{Cor}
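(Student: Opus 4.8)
The plan is to read the two algebraic transformation formulas \eqref{eq:hat-r-1-2} as the edge equations of the lattice $\Sigma^{m,n}$ and then verify that they degenerate to the Lelieuvre formulas \eqref{eq:L-r-u-v} in the small-step limit. First I would identify the transform labelled $\{1\}$ with the elementary shift $m\mapsto m+1$ and the transform labelled $\{2\}$ with $n\mapsto n+1$, so that \eqref{eq:hat-r-1-2} takes the difference form
\begin{align*}
\br^{m+1,n}-\br^{m,n} &= \bnu^{m+1,n}\times\bnu^{m,n},\\
\br^{m,n+1}-\br^{m,n} &= \bnu^{m,n+1}\times\bnu^{m,n}.
\end{align*}
In this guise each lattice edge is a cross product of the two normals at its endpoints, which is exactly the structural content of \eqref{eq:L-r-u-v}, where each tangent vector is a cross product built from $\bnu$ and one of its derivatives.

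To turn this structural parallel into a genuine limit I would introduce a lattice spacing through $u\approx m\epsilon$ and Taylor-expand the shifted normal as $\bnu^{m+1,n}=\bnu+\epsilon\,\bnu_{,u}+O(\epsilon^2)$. Since $\bnu\times\bnu=0$, the constant term of the cross product cancels and only the linear term survives,
\begin{equation*}
\bnu^{m+1,n}\times\bnu^{m,n}=\epsilon\,\bnu_{,u}\times\bnu+O(\epsilon^2),
\end{equation*}
so that dividing the first difference equation by $\epsilon$ and letting $\epsilon\to0$ reproduces $\br_{,u}=\bnu_{,u}\times\bnu$, the first of the Lelieuvre formulas. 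Applying the same naive expansion to the $n$-edge, however, yields $\br_{,v}=\bnu_{,v}\times\bnu=-\,\bnu\times\bnu_{,v}$, the opposite order to \eqref{eq:L-r-u-v}.

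The step I expect to be the main obstacle is therefore the reconciliation of this second direction. I would locate the resolution in the defining Moutard relations \eqref{eq:MT-u}--\eqref{eq:MT-v}, whose right-hand sides carry opposite signs in $u$ and in $v$; this built-in asymmetry means that the continuous reduction of the $\{2\}$-transform produces a normal increment $\bnu^{m,n+1}-\bnu\approx-\delta\,\bnu_{,v}$ with the sign opposite to the $\{1\}$-direction. Substituting this increment into $\bnu^{m,n+1}\times\bnu^{m,n}$ flips the order of the cross product and restores $\br_{,v}=\bnu\times\bnu_{,v}$, matching \eqref{eq:L-r-u-v}. Once this sign bookkeeping is carried through, the two difference equations coincide with the discrete Lelieuvre formulas of \cite{Kon-Pin}, which establishes \eqref{eq:hat-r-1-2} as the discrete analog of the continuous Lelieuvre formulas \eqref{eq:L-r-u-v}.
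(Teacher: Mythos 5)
Your reading of \eqref{eq:hat-r-1-2} as edge equations and the Taylor expansion of the cross product is exactly the content of the paper's corollary, and the sign issue you isolate in the second direction is precisely what the paper addresses in the Remark immediately following, via the orientation change $\bnu^{m,n}\to(-1)^n\bnu^{m,n}$. One inaccuracy in your resolution: the correct statement is not that the increment is reversed, $\bnu^{m,n+1}-\bnu\approx-\delta\,\bnu_{,v}$, but that consecutive normals in the second direction are nearly \emph{antiparallel}, $\bnu^{m,n+1}\approx-\bnu^{m,n}-\delta\,\bnu_{,v}$ (the sum, not the difference, is small) --- this staggering is forced by the discrete Moutard equation \eqref{eq:M-mn}, whose continuum limit to \eqref{eq:M-uv} requires it. Your final formula survives this correction only because the part of $\bnu^{m,n+1}$ proportional to $\bnu^{m,n}$ is annihilated in the cross product $\bnu^{m,n+1}\times\bnu^{m,n}$, so either version of the increment yields $\delta\,\bnu\times\bnu_{,v}$ and hence $\br_{,v}=\bnu\times\bnu_{,v}$; the conclusion stands, but the bookkeeping should be phrased in terms of the $(-1)^n$ reorientation.
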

\begin{Rem}
	The three points $\br^{m,n}$, $\br^{m\pm1,n}$, define the tangent plane of the first discrete coordinate curve at point $\br^{m,n}$, and the three points $\br^{m,n}$, $\br^{m,n\pm 1}$, define the tangent plane of the second curve at this point. Both planes coincide with the common plane of Proposition~\ref{prop:dis-as}, what allows to call the curves the discrete asymptotic lines. 
\end{Rem}

\begin{Def}[\cite{Sauer,Sauer2}]
	The discrete asymptotic net is a map $\br \colon \ZZ^2 \to \RR^3$ such that for arbitrary $(m,n) \in \ZZ^2$ the five points $\br^{m,n}$, $\br^{m\pm1,n}$, $\br^{m,n\pm 1}$ are coplanar. 
\end{Def}
\begin{figure}
	\begin{center}
		\includegraphics[width=3.5cm]{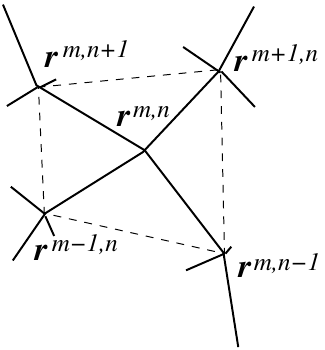}
	\end{center}
	\caption{Discrete asymptotic nets}
	\label{fig:d-as}
\end{figure} 
\begin{Cor}
	The algebraic superposition formula \eqref{eq:M-12} of the Moutard transformations on the level of the normal vector $\bnu^{m,n}$ can be interpreted as the discrete Moutard equation~\cite{NiSchief}
	\begin{equation}
	\label{eq:M-mn}
	\bnu^{m+1,n+1} - \bnu^{m,n} = f^{m,n} \left( \bnu^{m+1,n} - \bnu^{m,n+1} \right).
	\end{equation}
\end{Cor}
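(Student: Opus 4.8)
The plan is to read the algebraic superposition formula \eqref{eq:M-12} as a relation living on the lattice $\Sigma^{m,n}$ and to match it term by term with the discrete Moutard equation \eqref{eq:M-mn}. First I would fix the dictionary between the transformation indices and the lattice shifts. The untransformed normal $\bnu$ plays the role of $\bnu^{m,n}$; the two single transforms $\bnu^{\{1\}}$ and $\bnu^{\{2\}}$, obtained by one step in the first and the second transformation direction respectively, correspond to $\bnu^{m+1,n}$ and $\bnu^{m,n+1}$; and the doubly transformed normal $\bnu^{\{12\}}$ corresponds to $\bnu^{m+1,n+1}$. With this identification the left-hand side $\bnu^{\{12\}}-\bnu$ of \eqref{eq:M-12} becomes $\bnu^{m+1,n+1}-\bnu^{m,n}$, while the bracket $\bnu^{\{1\}}-\bnu^{\{2\}}$ becomes $\bnu^{m+1,n}-\bnu^{m,n+1}$, so the two sides already display exactly the structure required by \eqref{eq:M-mn}.

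Next I would absorb the scalar prefactor. The coefficient $\theta^1\theta^2/\theta^{12}$ in \eqref{eq:M-12} is built entirely from the transformation functions attached to the elementary quadrilateral with corners $\bnu^{m,n}$, $\bnu^{m+1,n}$, $\bnu^{m,n+1}$, $\bnu^{m+1,n+1}$. I would therefore define $f^{m,n}=\theta^1\theta^2/\theta^{12}$, regarding it as a scalar field indexed by the base vertex $(m,n)$ of that square, where the $\theta$'s carry the appropriate site labels inherited from the iterated construction. Substituting this definition turns \eqref{eq:M-12} verbatim into \eqref{eq:M-mn}, so that the superposition law \emph{is} the discrete Moutard equation.

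Since this is essentially an identification rather than an estimate, the step that requires care is the bookkeeping of the transformation functions: I must confirm that the prefactor is a genuinely well-defined function of the lattice site, independent of the order in which the two elementary transforms are composed. This is guaranteed by the permutability content already established. The theorem on permutability gives a single object $\bnu^{\{12\}}$ reachable along both paths, and the normalization \eqref{eq:MS-theta} fixes $\theta^{12}$ symmetrically in the two directions, so $f^{m,n}$ does not depend on whether the first or the second transform is applied first. The only residual freedom is the single additive constant in $\theta^{12}$ noted in the preceding Remark, which merely reflects the one-parameter family of superpositions and does not affect well-definedness at a fixed choice. Consequently the assignment $(m,n)\mapsto f^{m,n}$ is consistent across the whole lattice, and \eqref{eq:M-12} coincides with the discrete Moutard equation \eqref{eq:M-mn}.
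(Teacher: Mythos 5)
Your proposal is correct and is essentially the identification the paper intends: the Corollary follows immediately from reading \eqref{eq:M-12} on the lattice $\Sigma^{m,n}$ built by iterated Weingarten--Moutard transforms, with $\bnu^{\{1\}},\bnu^{\{2\}},\bnu^{\{12\}}$ relabelled as the shifted normals and $f^{m,n}=\theta^1\theta^2/\theta^{12}$. Your additional remarks on well-definedness via permutability and the residual additive constant in $\theta^{12}$ match the paper's surrounding Theorem and Remark.
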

\begin{Rem}
	To obtain the Moutard equation \eqref{eq:M-uv} and the Lelieuvre formulas~\eqref{eq:L-r-u-v} from \eqref{eq:M-mn} and \eqref{eq:hat-r-1-2} one first has to change the orientation of the normal vector according to $\bnu^{m,n} \to (-1)^n \bnu^{m,n}$.
\end{Rem}

Integrability of the hyperbolic surfaces can be understood as existence of transformations satisfying algebraic permutability principle. It turns out that one can construct analogous transformations for integrable discrete hyperbolic surfaces (discrete asymptotic nets). The corresponding transformation formulas, their geometric interpretation, and their permutability property have been considered in \cite{NiSchief,Nieszporski-dan,AD-W}. See also \cite{HVR} for development of their theory in direction of application in computer graphic. 

Hyperbolic surfaces with constant Gauss curvature, i.e. pseudospherical surfaces, are described by the celebrated sine-Gordon equation for the angle between the asymptotic coordinates on such surfaces. Restricting the Weingarten--Moutard transformation to such surfaces we obtain discrete asymptotic nets whose elementary quadrilaterals have opposite sides of equal length, and which provide integrable discrete analogs of pseudospherical surfaces~\cite{Sauer-pss,Wunderlich,Bobenko-Pinkall-pss}. The corresponding Bianchi's permutability theorem for the B\"{a}cklund transformation of the sine-Gordon equation provides its integrable difference analog \cite{Hirota-dsG}, and describes the angle between asymptotic coordinates on discrete pseudospherical surface.
The Bianchi surfaces~\cite{Bianchi}, whose integrability was discussed in~\cite{LeviSym-B}, are hyperbolic surfaces characterized by the property that their Weingarten transformation preserves Gauss curvature in the corresponding points; see~\cite{Bob-Schief,Schief-Calapso,DNS-ass,Schief-JGP114} for discussion of discrete analogs and this and other integrable reductions of asymptotic nets.

\subsection{Discrete BKP equation}
Integrable non-linear equations appear in the context of discrete Moutard equation, prior to reductions of discrete hyperbolic surfaces, for more then two discrete variables \cite{NiSchief,BQL}. Consider a map $\bnu \colon \ZZ^N \to \RR^M$, $N, M \geq 3$, which satisfies the system of discrete Moutard equations in each pair of variables
\begin{equation} \label{eq:Mout-N}
\bnu^{\bn + \be_i + \be_j} - \bnu^{\bn} = f_{ij}^{\bn} \left( \bnu^{\bn+\be_i} - \bnu^{\bn+\be_j} \right), \qquad 1\leq i \neq j \leq N;
\end{equation}
here $\bn\in\ZZ^N$, and $\be_i$ is the unit vector in $i$th direction, $i=1,\dots ,N$.
Compatibility of the system leads to the following set of nonlinear equations
\begin{equation} \label{eq:nonl-BQL-f}
1 + f_{jk}^{\bn+\be_i}(f_{ij}^{\bn} - f_{ik}^{\bn}) = f_{ik}^{\bn + \be_j} f_{ij}^{\bn} = f_{ij}^{\bn+\be_k}f_{ik}^{\bn}, 
\qquad i,j,k \quad \text{distinct}, \qquad f_{ji}^{\bn}=-f_{ij}^{\bn}.
\end{equation}
The second equality  
implies existence of the potential $\tau:\ZZ^N\to\RR$, 
in terms of which the functions $f_{ij}$ can be written as
\begin{equation} \label{eq:tau}
f_{ij}^{\bn} = \frac{\tau^{\bn+\be_i}\tau^{\bn + \be_j}}{\tau^{\bn} \, \tau^{\bn + \be_i + \be_j}}, \qquad i < j .
\end{equation}
The first equality can be then rewritten in the form of the 
system of Miwa's discrete 
BKP equations \cite{Miwa}
\begin{equation} \label{eq:BKP-nlin}
\tau^{\bn} \tau^{\bn+\be_i + \be_j + \be_k} = \tau^{\bn + \be_i + \be_j}\tau^{\bn + \be_k} - \tau^{\bn+ \be_i + \be_k}\tau^{\bn+\be_j} + 
\tau^{\bn+\be_j + \be_k}\tau^{\bn+\be_i}, \quad 1\leq i< j < k \leq N.
\end{equation}
Geometric meaning of the above equations goes beyond the theory of discrete asymptotic nets, and can be incorporated~\cite{BQL} into the theory of discrete conjugate nets~\cite{MQL,TQL}.

\section{Multidimensional conjugate nets, their fundamental \\ transformation, and multidimensional lattices \\of planar quadrilaterals} \label{sec:conj}  
Conjugate nets on a surface are second, after asymptotic nets, distinguished coordinate systems studied in depth by geometers of XIXth century. They include, as a special subcase, curvature coordinates, whose theory will be studied in the next Section. For classical results on the subject see works of Gabriel Lam\'{e}~\cite{Lame}, Luigi Bianchi~\cite{Bianchi} or Gaston Darboux~\cite{Darboux-OS}. 
The relation of conjugate nets on a surface to linear partial differential equations of the second order allows to transfer special theorems on such equations~\cite{Laplace,Levy,Moutard,Goursat} to the geometric level. The theory of transformations within special classes of such equations/conjugate nets has obtained mature form in works of Hans Jonas~\cite{Jonas} and Luther P. Eisenhart~\cite{Eisenhart-TS}, see also~\cite{Tzitzeica,Lane,Finikov}.
The Darboux equations, which describe multidimensional conjugate nets, were rediscovered in~\cite{ZaMa1,ZaMa2} in the context of soliton theory as the most general partial differential equations integrable by the non-local $\bar\partial$-dressing method. Moreover, in \cite{KvL} they were isolated as the simplest equations within multicomponent KP hierarchy. We start this Section with presenting basic elements of the theory of conjugate nets on a surface and their transformations. Then we move to the multidimensional nets.

\subsection{Conjugate coordinates on a surface, and the L\'{e}vy transformation}
Local coordinates $(u,v)$, on a surface in the space $\RR^N$ of arbitrary dimension $N\geq 3$, are called conjugate if the second mixed derivative of the position vector $\br(u,v)$ are tangent to the surface. The defining equation takes then the form of the \emph{Laplace equation}
\begin{equation} \label{eq:L-uv}
\br_{,uv} = a \br_{,u} + b \br_{,v},
\end{equation} 
where $a(u,v)$ and $b(u,v)$ are corresponding functions of the conjugate parameters. The following considerations allow to introduce concepts relevant in the general theory of transformations of multidimensional conjugate nets.

Let $\theta$ be a scalar solution of equation \eqref{eq:L-uv}, linearly independent of the components of $\br$, then by direct calculation one can check that the so called \emph{L\'{e}vy transforms} of $\br$, given by
\begin{equation}
\br^{(u)} = \br - \frac{\theta}{\theta_{,u}} \br_{,u}, \qquad
\br^{(v)} = \br - \frac{\theta}{\theta_{,v}} \br_{,v}, 
\end{equation}
are new surfaces with $(u,v)$ being local conjugate coordinates. The corresponding Laplace equations~\eqref{eq:L-uv} of the new nets have coefficients
\begin{align}
a^{(u)} = a + \left(\log\frac{\theta}{\theta_{,u}}\right)_{,v}, \qquad & b^{(u)} = b^{(v)} + \left(\log a^{(u)}\right)_{,u},\\
a^{(v)} = a^{(u)} + \left(\log b^{(v)}\right)_{,v}, \qquad & b^{(v)} = b + \left(\log\frac{\theta}{\theta_{,v}}\right)_{,u},
\end{align}
what can be verified by direct calculation.
\begin{figure}
	\begin{center}
		\includegraphics[width=8cm]{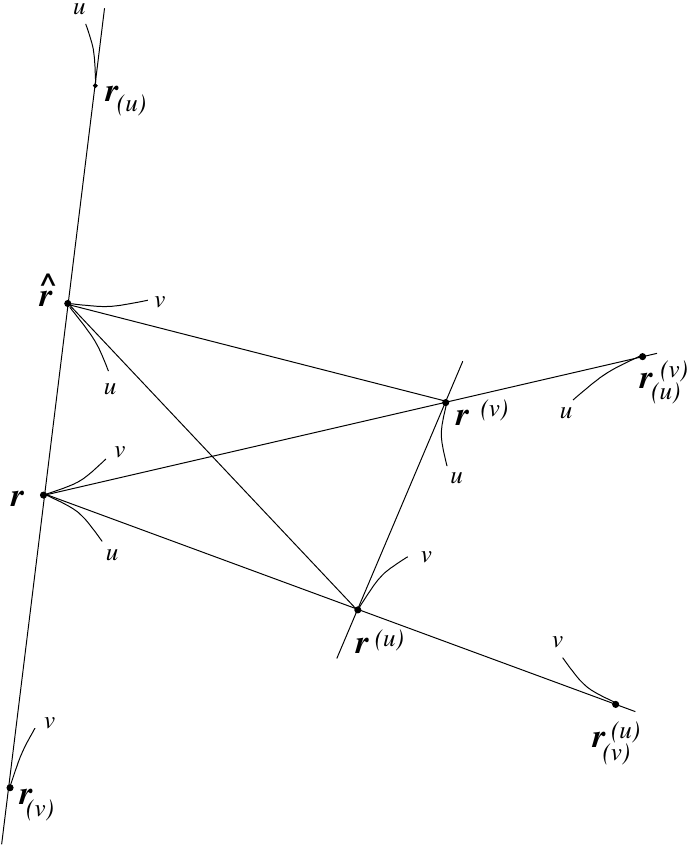}
	\end{center}
	\caption{The fundamental transform $\hat{\br}$, the L\'{e}vy transforms $\br^{(u)}$, $\br^{(v)}$, the adjoint L\'{e}vy transforms $\br_{(u)}$, $\br_{(v)} $, and the Laplace transforms $\br^{(u)}_{(v)}$, $\br^{(v)}_{(u)} $ of two-dimensional conjugate net $\br$}
	\label{fig:fund-uv}
\end{figure} 
In the proof it is convenient to show first that
\begin{equation*}
\br^{(u)}_{,v} = a^{(u)} \left( \br^{(u)} - \br^{(v)} \right), \qquad 
\br^{(v)}_{,u} = b^{(v)} \left( \br^{(v)} - \br^{(u)} \right),
\end{equation*}
what means that the lines joining the corresponding points of both L\'{e}vy transforms of the conjugate net $\br$ are simultaneously tangent to $v$-coordinate lines of $\br^{(u)}$ and to $u$-coordinate lines of $\br^{(v)}$, see Figure~\ref{fig:fund-uv}.

Few comments are in order:
\begin{itemize}
	\item Lines through $\br$ in direction of $\br_{,u}$ form the \emph{$u$-tangent congruence} of the conjugate net. Equivalently, $\br$ is \emph{$u$-focal net} of the congruence. Similarly one defines the $v$-tangent congruence of $\br$, which is its $v$-focal net. Using such a terminology one can state that $\br^{(v)}$ is $u$-focal net of the $v$-tangent congruence of $\br^{(u)}$, and \emph{vice versa}.
	\item Notice that we use the notion of congruence of lines in the narrow sense, i.e. the parameters $(u,v)$ of the family define its focal nets. This means that one-dimensional family of lines parametrized by $u$  forms a developable surface and is made of tangents to $u$-coordinate on the $u$-focal net; similar condition folds for $v$-parameter family of the congruence. We say that the congruence is referred to its developables.
	\item The unique $v$-focal net of the $u$-tangent congruence to $\br$, given by
	\begin{equation}
	\br^{(u)}_{(v)} = \br -\frac{1}{b} \br_{,u}, 
	\end{equation} 
	is called the \emph{$uv$-Laplace transform of $\br$}. Analogously one defines the $vu$-Laplace transform of $\br$
	\begin{equation}
	\br^{(v)}_{(u)} = \br -\frac{1}{a} \br_{,v}.
	\end{equation}  
	In particular the conjugate net $\br^{(u)}$ is the $uv$-Laplace transform  of $\br^{(v)}$. 
	\item The net $\br^{(u)}$ is called \emph{conjugate} to the $u$-tangent congruence of $\br$, similarly the net $\br^{(v)}$ is conjugate to the $v$-tangent congruence of $\br$. In such  relation the conjugate coordinates on a surface are the parameters which define the focal nets of the congruence. One can reverse the situation and try to find focal nets of a congruence conjugate to a given conjugate net. Such focal nets, denoted by $\br_{(u)}$ and $\br_{(v)}$, are called \emph{adjoint L\'{e}vy transforms} of the net.
	\item The conjugate net $\hat{\br}$ is called a \emph{fundamental transform} of $\br$ when the two-dimensional family of lines joining their corresponding points forms a congruence (in the narrow sense explained above) whose developables cut both nets along the conjugate coordinate lines, i.e. both nets are conjugate to the same congruence.
\end{itemize}

\subsection{Multidimensional conjugate nets and the Darboux equations}

Consider $N$-dimensional submanifold in $\RR^M$ with local parameters $\bu = (u_1, \dots , u_N)$ satisfying in each pair $(u_i, u_j)$, $i\neq j$, the conjugate net condition
\begin{equation} \label{eq:r-ij-conj}
\br_{,ij} = a_{ij} \br_{,i} + a_{ji} \br_{,j}.
\end{equation}
The functions $a_{ij}$ of the local conjugate parameters cannot be arbitrary, because for $N>2$ they should satisfy the following compatibility conditions of the above system \eqref{eq:r-ij-conj} of Laplace equations
\begin{equation} \label{eq:Darboux-a}
a_{ij,k} = a_{ij}a_{jk} + a_{ik} a_{kj} - a_{ij} a_{ik} , \qquad i,j,k \quad \text{distinct}.
\end{equation}
The nonlinear \emph{Darboux equations} \eqref{eq:Darboux-a} imply in particular that $a_{ij,k} = a_{ik,j}$, what allows to introduce potentials $h_i$, called \emph{Lam\'{e} coefficients}, such that
\begin{equation}
a_{ij} = \frac{h_{i,j}}{ h_i }, \qquad i\neq j,
\end{equation}
and correspondingly the Laplace system~\eqref{eq:r-ij-conj} takes the form
\begin{equation}
\label{eq:r-Lapl-h}
\br_{,ij} = (\log h_i)_{,j} \br_{,i} + (\log h_j)_{,i} \br_{,j}\qquad i\neq j.
\end{equation}
The remaining part of the Darboux equations in Laplace coefficients $a_{ij}$ can be written in terms of the Lam\'{e} coefficients as
\begin{equation} \label{eq:Darboux-L}
h_{i,jk}  = \frac{h_{j,k} h_{i,j}}{h_j}  + 
\frac{h_{k,j} h_{i,k}}{h_k} , \qquad i,j,k \quad \text{distinct}.
\end{equation}

Following Darboux, let us introduce the suitably scaled tangent vectors $\bX_i$, $i=1,\dots ,N$, from equations
\begin{equation} \label{eq:r-h-X}
\br_{,i} = h_i \bX_i .
\end{equation}
Then the Laplace equations \eqref{eq:r-ij-conj} take the particularly simple form
\begin{equation} \label{eq:lin-X}
\bX_{i,j} = \beta_{ij} \bX_j, \qquad i\neq j,
\end{equation}
where the \emph{rotation coefficients} $\beta_{ij}$ are defined by the linear system adjoint to~\eqref{eq:lin-X} 
\begin{equation} \label{eq:lin-h}
h_{j,i} = \beta_{ij} h_i, \qquad i\neq j.
\end{equation} 
The corresponding version of the Darboux equation reads
\begin{equation} \label{eq:Darboux-beta}
\beta_{ij,k} = \beta_{ik} \beta_{kj}, \qquad i,j,k \quad \text{distinct}.
\end{equation}

The tangent lines to $i$-th coordinate on the conjugate net $\br$ form its $i$-th tangent congruence. Its $j$-th focal net, called $ij$-Laplace transform of $\br$, is given by
\begin{equation}
\br^{(i)}_{(j)} = \br - \frac{1}{a_{ji}}\br_{,i}.
\end{equation}
The Laplace transforms satisfy the following identities
\begin{equation}
(\br^{(i)}_{(j)})^{(j)}_{(i)} =\br, \qquad (\br^{(i)}_{(j)})^{(j)}_{(k)} =\br^{(i)}_{(k)}, \qquad (\br^{(i)}_{(j)})^{(k)}_{(i)} =\br^{(k)}_{(j)}. 
\end{equation}
This means that each generic $N$-dimensional conjugate net comes together with whole system of conjugate nets enumerated by points of the $Q(A_{N-1})$ root lattice.

\subsection{The vectorial fundamental (binary Darboux) transformation \\of conjugate nets}
The contemporary theory of transformations of conjugate nets and of the Darboux equations is based on the following Lemma, given first in the discrete setting in~\cite{MDS}. The present version is its direct limit.

\begin{Lem} \label{lem:vect-fund}
	Given solution $\beta_{ij}$ of the Darboux equations~\eqref{eq:Darboux-beta}, and given solution $\bY_i$ of the linear system \eqref{eq:lin-X} taking values in the (column vector) space $\RR^K$, and given solution $\bY^*_i$ of the adjoint linear system~\eqref{eq:lin-h} taking values in the (row vector) space $\RR^L$. \\
	1. 	There exists the $K\times L$ matrix-valued potential $\bTh[\bY,\bY^*]$ defined by the following compatible system
	\begin{equation} \label{eq:Omega-i}
	\bTh[\bY,\bY^*]_{,i} = \bY_i \otimes \bY_i^*, \qquad i=1,\dots ,N.
	\end{equation}
	2. If $K=L$ and the potential $	\bTh[\bY,\bY^*]$ is invertible, then the functions 
	\begin{equation} \label{eq:vect-fund-beta}
	\hat{\beta}_{ij} = \beta_{ij} - \bY^*_j \bTh[\bY,\bY^*]^{-1} \bY_i,
	\end{equation}
	are new solutions of the Darboux equations.\\
	3. The vector-valued functions
	\begin{equation}
	\hat{\bY}_i = \bTh[\bY,\bY^*]^{-1} \bY_i, \qquad 
	\hat{\bY}^*_i = \bY^*_i \bTh[\bY,\bY^*]^{-1},
	\end{equation}
	are the corresponding new solutions of the linear and the adjoint linear problem equations~\eqref{eq:lin-X}-\eqref{eq:lin-h}. In addition, the new matrix-valued potential is of the form
	\begin{equation}
	\bTh[\hat{\bY},\hat{\bY}^*] = C - \bTh[\bY,\bY^*]^{-1},
	\end{equation}
	where $C$ is a constant operator.
\end{Lem}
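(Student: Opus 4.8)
The plan is to treat the three parts in the order that exposes their logical dependence, extracting first a single differentiation rule for $\bTh[\bY,\bY^*]^{-1}$ from which parts~2 and~3 both follow. For part~1, I would verify the closedness of the defining system \eqref{eq:Omega-i}, i.e.\ that $\partial_j(\bY_i\otimes\bY^*_i)=\partial_i(\bY_j\otimes\bY^*_j)$ for $i\neq j$; the potential $\bTh[\bY,\bY^*]$ then exists (up to an additive constant matrix) by the Poincar\'{e} lemma. Differentiating $\bY_i\otimes\bY^*_i$ in $u_j$ and inserting the linear system $\bY_{i,j}=\beta_{ij}\bY_j$ from \eqref{eq:lin-X} together with the adjoint system \eqref{eq:lin-h}, which in the present normalization reads $\bY^*_{i,j}=\beta_{ji}\bY^*_j$, gives
\[
\partial_j(\bY_i\otimes\bY^*_i)=\beta_{ij}\,\bY_j\otimes\bY^*_i+\beta_{ji}\,\bY_i\otimes\bY^*_j,
\]
which is manifestly symmetric in $i$ and $j$, so the integrability condition holds.

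The key step for parts~2 and~3 is the identity
\[
(\bTh^{-1})_{,i}=-\bTh^{-1}(\bY_i\otimes\bY^*_i)\bTh^{-1}=-\hat\bY_i\otimes\hat\bY^*_i,
\]
obtained from $(\bTh^{-1})_{,i}=-\bTh^{-1}\bTh_{,i}\bTh^{-1}$ and the definitions of $\hat\bY_i,\hat\bY^*_i$. This immediately settles the new-potential claim, since then $\bTh[\hat\bY,\hat\bY^*]_{,i}=\hat\bY_i\otimes\hat\bY^*_i=-(\bTh^{-1})_{,i}$, so the two potentials differ by a constant operator, $\bTh[\hat\bY,\hat\bY^*]=C-\bTh^{-1}$. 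To obtain the transformed linear solutions I would differentiate $\hat\bY_i=\bTh^{-1}\bY_i$, substitute the identity above and $\bY_{i,j}=\beta_{ij}\bY_j$, and recognise the scalar $\bY^*_j\bTh^{-1}\bY_i$; this yields $\hat\bY_{i,j}=(\beta_{ij}-\bY^*_j\bTh^{-1}\bY_i)\hat\bY_j=\hat\beta_{ij}\hat\bY_j$, exactly \eqref{eq:lin-X} with the coefficients \eqref{eq:vect-fund-beta}. The same computation applied to $\hat\bY^*_i=\bY^*_i\bTh^{-1}$ reproduces the adjoint system \eqref{eq:lin-h} with $\hat\beta$.

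Finally, for part~2 I would establish the Darboux equations \eqref{eq:Darboux-beta} for $\hat\beta$ by direct differentiation. Writing $\langle j,i\rangle:=\bY^*_j\bTh^{-1}\bY_i$ so that $\hat\beta_{ij}=\beta_{ij}-\langle j,i\rangle$, differentiating in $u_k$ (with $i,j,k$ distinct) and using the three linear relations together with the $\bTh^{-1}$ identity gives
\[
\hat\beta_{ij,k}=\beta_{ik}\beta_{kj}-\beta_{kj}\langle k,i\rangle-\beta_{ik}\langle j,k\rangle+\langle j,k\rangle\langle k,i\rangle,
\]
which factorises exactly as $(\beta_{ik}-\langle k,i\rangle)(\beta_{kj}-\langle j,k\rangle)=\hat\beta_{ik}\hat\beta_{kj}$. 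Once part~3 is in hand one could alternatively argue that cross-differentiation of $\hat\bY_{i,j}=\hat\beta_{ij}\hat\bY_j$ forces $\hat\beta$ to satisfy the Darboux system, but the direct route avoids any genericity hypothesis on the $\hat\bY_i$. I expect the only real pitfall to be bookkeeping of the adjoint convention, namely keeping $\beta_{ji}$ rather than $\beta_{ij}$ in $\bY^*_{i,j}$, since a single index slip there propagates through every subsequent step.
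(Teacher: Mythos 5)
Your verification is correct in every step, including the one place where it is easy to go wrong: the adjoint convention $\bY^*_{i,j}=\beta_{ji}\bY^*_j$, which is indeed what \eqref{eq:lin-h} gives after relabelling indices. The paper itself does not prove this Lemma at all --- it cites the discrete version from the reference [MDS] and obtains the continuous statement as its ``direct limit'', so your argument is genuinely different in route rather than in substance. What your approach buys is a self-contained local proof: the compatibility check $\partial_j(\bY_i\otimes\bY^*_i)=\beta_{ij}\,\bY_j\otimes\bY^*_i+\beta_{ji}\,\bY_i\otimes\bY^*_j$ settles part~1, the single identity $(\bTh[\bY,\bY^*]^{-1})_{,i}=-\hat\bY_i\otimes\hat\bY^*_i$ drives everything else, and your factorisation $\hat\beta_{ij,k}=\hat\beta_{ik}\hat\beta_{kj}$ checks out exactly (the cross terms $-\beta_{kj}\langle k,i\rangle-\beta_{ik}\langle j,k\rangle+\langle j,k\rangle\langle k,i\rangle$ match the expansion of $(\beta_{ik}-\langle k,i\rangle)(\beta_{kj}-\langle j,k\rangle)$ since the brackets are scalars). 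What the paper's route buys instead is conceptual economy within its narrative --- the discrete lemma is the primary object and the continuous one is derived from it, consistent with the paper's thesis that discrete systems are ``more basic''; but as a standalone proof yours is complete, and your remark that proving part~2 directly avoids any genericity assumption on the $\hat\bY_i$ (which a cross-differentiation argument from part~3 would require) is a sensible point. The only caveat worth adding is that existence of the potential in part~1 is local, or global on a simply connected parameter domain, which is the standing assumption throughout.
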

\begin{Rem}
	Notice that equations \eqref{eq:r-h-X} mean that one can write $\br = \bTh[\bX,h]$.
\end{Rem}
\begin{Rem}
	In the theory of Darboux transformations of the KP hierarchy \cite{OevelSchief} the function $\bTh$ is called the squared-eigenfunction potential.
\end{Rem}
By suitable arrangement of the transformation data~\cite{TQL} we arrive at the following version of the above Lemma.
\begin{Th} Consider conjugate net $\br$ with Lam\'{e} coefficients $h_i$, normalized tangent vectors $\bX_i$ and rotation coefficients $\beta_{ij}$. Given transformation data $\bY_i$, $\bY^*_i$ which satisfy point 2. of Lemma~\ref{lem:vect-fund} then 
	\begin{equation} \label{eq:vect-fund}
	\hat\br = \br - \bTh[\bX,\bY^*] \bTh[\bY, \bY^*]^{-1} \bTh[\bY,h],
	\end{equation}
	is new conjugate net, called the fundamental transform of $\br$. The new rotation coefficients are given by equation~\eqref{eq:vect-fund-beta}, and the  Lam\'{e} coefficients and normalized tangent vectors read
	\begin{equation}
	\hat{h}_i = h_i - \bY^*_i \bTh[\bY, \bY^*]^{-1} \bTh[\bY,h],
	\qquad \hat{\bX}_i = \bX_i - \bTh[\bX,\bY^*] \bTh[\bY, \bY^*]^{-1} \bY_i.
	\end{equation}
\end{Th}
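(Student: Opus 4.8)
The plan is to deduce the Theorem from Lemma~\ref{lem:vect-fund} by recognizing that the geometric data of the net $\br$ are themselves solutions of the two linear systems attached to the given $\beta_{ij}$. Following the Remark after the Lemma, the normalized tangent vectors $\bX_i$ solve the linear system~\eqref{eq:lin-X} just as $\bY_i$ does, the Lam\'{e} coefficients $h_i$ solve the adjoint system~\eqref{eq:lin-h} just as $\bY^*_i$ does, and $\br=\bTh[\bX,h]$. The new rotation coefficients $\hat\beta_{ij}$, together with the fact that they again solve the Darboux equations~\eqref{eq:Darboux-beta}, are already furnished by point~2 of the Lemma, so it remains only to check that $\hat\bX_i$, $\hat h_i$ and $\hat\br$ from the statement are the corresponding tangent vectors, Lam\'{e} coefficients and position vector of the transformed net. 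Concretely, I would regard $\hat\bX_i$ and $\hat h_i$ as the values, on the solutions $\bX_i$ and $h_i$, of the generic (rather than data-specific) fundamental action
\[
\bv_i \mapsto \bv_i - \bTh[\bv,\bY^*]\,\bTh[\bY,\bY^*]^{-1}\bY_i,\qquad
\eta_i \mapsto \eta_i - \bY^*_i\,\bTh[\bY,\bY^*]^{-1}\bTh[\bY,\eta],
\]
acting on column solutions $\bv_i$ of~\eqref{eq:lin-X} and row solutions $\eta_i$ of~\eqref{eq:lin-h}.

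The core of the argument is then three differentiations of the same type, all relying on the same toolkit: the linear equations $\bX_{i,j}=\beta_{ij}\bX_j$ and $h_{j,i}=\beta_{ij}h_i$, the defining relation~\eqref{eq:Omega-i} in the forms $\bTh[\bX,\bY^*]_{,j}=\bX_j\otimes\bY^*_j$ and $\bTh[\bY,h]_{,i}=h_i\bY_i$, the product rule for the inverse $\bigl(\bTh[\bY,\bY^*]^{-1}\bigr)_{,j}=-\bTh[\bY,\bY^*]^{-1}(\bY_j\otimes\bY^*_j)\bTh[\bY,\bY^*]^{-1}$, and the two algebraic identities $\bY^*_j\bTh[\bY,\bY^*]^{-1}\bY_i=\beta_{ij}-\hat\beta_{ij}$ and $\bY^*_i\bTh[\bY,\bY^*]^{-1}\bTh[\bY,h]=h_i-\hat h_i$ read off from~\eqref{eq:vect-fund-beta} and from the definition of $\hat h_i$.

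First I would verify the transformed linear system $\hat\bX_{i,j}=\hat\beta_{ij}\hat\bX_j$: differentiating $\hat\bX_i$ in $u_j$ and substituting the first identity above, the three resulting contributions regroup as $\hat\beta_{ij}\bigl(\bX_j-\bTh[\bX,\bY^*]\bTh[\bY,\bY^*]^{-1}\bY_j\bigr)=\hat\beta_{ij}\hat\bX_j$, exactly mirroring the computation that gives $\hat\beta_{ij}$ in the Lemma. The analogous computation for the row solution $h_i$ yields $\hat h_{j,i}=\hat\beta_{ij}\hat h_i$. Finally I would differentiate $\hat\br$ from~\eqref{eq:vect-fund}: the term produced by $\bTh[\bX,\bY^*]_{,i}$ is simplified by the second identity to $(h_i-\hat h_i)\bX_i$, the two contributions proportional to $h_i\,\bTh[\bX,\bY^*]\bTh[\bY,\bY^*]^{-1}\bY_i$ cancel, and the remainder collapses to $\hat\br_{,i}=\hat h_i\bigl(\bX_i-\bTh[\bX,\bY^*]\bTh[\bY,\bY^*]^{-1}\bY_i\bigr)=\hat h_i\hat\bX_i$. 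This single relation shows at once that $\hat\br$ is a conjugate net, that $(u_1,\dots,u_N)$ stay conjugate on it, and that its Lam\'{e} coefficients and normalized tangent vectors are the stated $\hat h_i$ and $\hat\bX_i$; combined with $\hat\beta_{ij}$ from the Lemma this is the full claim.

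I expect the only real difficulty to be organizational rather than conceptual: the factors are non-commuting matrices interleaved with vectors and scalars, so I must keep the ordering rigid and insert the two simplifying identities at precisely the right stage, so that in each of the three computations the unwanted terms cancel in pairs and the surviving terms assemble into the transformed quantity with coefficient $\hat\beta_{ij}$ (respectively $\hat h_i$). No step requires any input beyond Lemma~\ref{lem:vect-fund} and the linear relations, so the result is genuinely a corollary of the Lemma under the ``suitable arrangement of the transformation data.''
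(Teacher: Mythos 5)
Your proposal is correct and follows exactly the route the paper intends: the paper gives no explicit proof, merely asserting that the theorem follows from Lemma~\ref{lem:vect-fund} ``by suitable arrangement of the transformation data'' (citing~\cite{TQL}) together with the Remark that $\br=\bTh[\bX,h]$, and your three differentiations --- yielding $\hat\bX_{i,j}=\hat\beta_{ij}\hat\bX_j$, $\hat h_{j,i}=\hat\beta_{ij}\hat h_i$ and $\hat\br_{,i}=\hat h_i\hat\bX_i$ via the identities $\bY^*_j\bTh[\bY,\bY^*]^{-1}\bY_i=\beta_{ij}-\hat\beta_{ij}$ and $\bY^*_i\bTh[\bY,\bY^*]^{-1}\bTh[\bY,h]=h_i-\hat h_i$ --- are precisely the computation that fills in that gap. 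All the stated cancellations check out, so no issues.
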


When the transformation data are scalar functions $Y_i$ and $Y^*_i$ then $\theta = \bTh[Y,h]$ is a scalar additional solution of the Laplace equation of $\br$. The $N$-dimensional vector-valued function $\bX^\prime = \bTh[\bX,Y^*]$ defines new conjugate net with Lam\'{e} coefficients $Y_i^*$ and the same normalized tangent vectors $\bX_i$. The function $\theta^\prime = \bTh[Y,Y^*]$ is a scalar solution of the Laplace equation of $\bX^\prime$
and in such a notation the scalar fundamental transformation reads
\begin{equation}
\hat{\br} = \br - \frac{\theta}{\theta^\prime}\bX^\prime.
\end{equation}
The vector $\bX^\prime$ points in the direction of the congruence of the transformation, and is called the \emph{Combescure transform} of $\br$. 
The corresponding $i$-th L\'{e}vy transform $\br^{(i)}$ of $\br$, the intersection of $i$-tangents of $\br$ and $\hat{\br}$ reads
\begin{equation}
\br^{(i)} = \br - \frac{\theta}{\theta_{,i}}\br_{,i} = \br - \frac{\theta}{Y_i}\bX_i ,
\end{equation}
while the $i$-th adjoint L\'{e}vy transform $\br_{(i)}$ of $\br$, whose $i$-th tangent congruence is the congruence of the transformation is given by
\begin{equation}
\br_{(i)} = \br - \frac{h_i}{Y^*_i} \bX^\prime.
\end{equation}
See Figure~\ref{fig:fund-uv} to compare with the two-dimensional case. 
\subsection{Superpositions of fundamental transformations, and \\multidimensional lattices of planar quadrilaterals}
The vectorial form of the fundamental transformation given in Lemma~\ref{lem:vect-fund} contains already their permutability theorem.
\begin{Th} \label{th:vect-fund}
	Assume the following splitting of the data of the vectorial fundamental transformation
	\begin{equation}
	\bY_i = \begin{pmatrix} \bY^a_i \\ \bY^b_i \end{pmatrix}, \qquad \bY^*_i = \left( \bY^*_{ai} \; \bY^*_{bi}\right) ,
	\end{equation}
	associated with the partition $\RR^K = \RR^{K_a} \oplus \RR^{K_b}$, which implies the following splitting of the potentials
	\begin{gather*}
	\bTh[\bY,h] = \begin{pmatrix} \bTh[\bY^a,h] \\ \bTh[\bY^b, h] \end{pmatrix}, \qquad 
	\bTh[\bX,\bY^*] = \left( \bTh[\bX,\bY^*_a] \; \bTh[\bX,\bY^*_b] \right),\\
	\bTh[\bY,\bY^*] = \begin{pmatrix} \bTh[\bY^a,\bY^*_a] &
	\bTh[\bY^a,\bY^*_b] \\ \bTh[\bY^b,\bY^*_a] &
	\bTh[\bY^b,\bY^*_b] \end{pmatrix}.
	\end{gather*}
	Then the vectorial fundamental transformation is equivalent to the following superposition of vectorial fundamental transformations:\\
	1. Transformation $\br \to \br^{\{ a\}} $ with the data $\bY^a_i$, $\bY^*_{ai}$ and the corresponding potentials $\bTh[\bY^a,h]$, $\bTh[\bY^a, \bY^*_{a}]$, $\bTh[\bX,\bY^*_a]$
	\begin{align}
	\br^{ \{ a \}} & = \br - \bTh[\bX,\bY^*_a] \bTh[\bY^a, \bY^*_a]^{-1} \bTh[\bY^a,h], \\
	h^{\{a\}}_i & = h_i - \bY^*_{ai} \bTh[\bY^a, \bY^*_a]^{-1} \bTh[\bY^a,h],\\
	{\bX}^{\{a\}}_i & = \bX_i - \bTh[\bX,\bY^*_a] \bTh[\bY^a, \bY^*_a]^{-1} \bY_i^a.
	\end{align}
	2. Application on the result the vectorial fundamental transformation with the transformed data
	\begin{align}
	\bY^{*\{a\}}_{bi} & = \bY^*_{bi} - \bY^*_{ai} \bTh[\bY^a, \bY^*_a]^{-1} \bTh[\bY^a,\bY^*_b],\\
	{\bY}^{b\{a\}}_i & = \bY^b_i - \bTh[\bY^b,\bY^*_a] \bTh[\bY^a, \bY^*_a]^{-1} \bY_i^a,
	\end{align}
	and potentials
	\begin{align*}
	\bTh[\bY^b,h]^{ \{ a \}} & = \bTh[\bY^b,h] - \bTh[\bY^b,\bY^*_a] \bTh[\bY^a, \bY^*_a]^{-1} \bTh[\bY^a,h] = \bTh[\bY^{b\{ a\}},h^{\{a\}}], \\
	\bTh[\bY^b,\bY^*_b]^{ \{ a \}} & = \bTh[\bY^b,\bY^*_b] - \bTh[\bY^b,\bY^*_a] \bTh[\bY^a, \bY^*_a]^{-1} \bTh[\bY^a,\bY^*_b] = \bTh[\bY^{b\{ a\}},\bY_b^{*\{a\}}],\\
	\bTh[\bX,\bY^*_b]^{ \{ a \}} & = \bTh[\bX,\bY^*_b] - \bTh[\bX,\bY^*_a] \bTh[\bY^a, \bY^*_a]^{-1} \bTh[\bY^a,\bY^*_b] = \bTh[\bX^{\{ a\}},\bY_b^{*\{a\}}],
	\end{align*}
	i.e.
	\begin{equation} \label{eq:fund-sup-r}
	\hat{\br} = \br^{\{ a,b\}} = \br^{\{a\}} - \bTh[\bX,\bY^*_b]^{ \{ a \}} \left( \bTh[\bY^b,\bY^*_b]^{ \{ a \}} \right)^{-1} \bTh[\bY^b,h]^{ \{ a \}}.
	\end{equation}
\end{Th}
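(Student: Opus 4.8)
The plan is to reduce the asserted equivalence to the block-matrix (Schur complement) factorization of the inverse of $\bTh[\bY,\bY^*]$. Writing this potential in the $2\times 2$ block form dictated by the partition $\RR^K=\RR^{K_a}\oplus\RR^{K_b}$, with diagonal blocks $A=\bTh[\bY^a,\bY^*_a]$, $D=\bTh[\bY^b,\bY^*_b]$ and off-diagonal blocks $B=\bTh[\bY^a,\bY^*_b]$, $C=\bTh[\bY^b,\bY^*_a]$, I would use the standard $LDU$ decomposition
\[
\bTh[\bY,\bY^*]=\begin{pmatrix} I & 0 \\ CA^{-1} & I\end{pmatrix}\begin{pmatrix} A & 0 \\ 0 & S\end{pmatrix}\begin{pmatrix} I & A^{-1}B \\ 0 & I\end{pmatrix},\qquad S=D-CA^{-1}B,
\]
valid whenever $A$ is invertible, and invert it. The key point is that the Schur complement is precisely the step-two potential $S=\bTh[\bY^b,\bY^*_b]^{\{a\}}$ of the statement.

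First I would substitute the resulting inverse into the one-step formula $\hat\br=\br-\bTh[\bX,\bY^*]\,\bTh[\bY,\bY^*]^{-1}\,\bTh[\bY,h]$, using the splitting of $\bTh[\bX,\bY^*]=(\bTh[\bX,\bY^*_a]\;\;\bTh[\bX,\bY^*_b])$ into a row of blocks and of $\bTh[\bY,h]$ into a column of blocks $\bTh[\bY^a,h]$, $\bTh[\bY^b,h]$. Applying the lower-triangular factor of the inverse on the right to $\bTh[\bY,h]$ produces the pair $\bigl(\bTh[\bY^a,h],\,\bTh[\bY^b,h]^{\{a\}}\bigr)$, while applying the upper-triangular factor on the left to $\bTh[\bX,\bY^*]$ produces $\bigl(\bTh[\bX,\bY^*_a],\,\bTh[\bX,\bY^*_b]^{\{a\}}\bigr)$, where the $\{a\}$-superscripted quantities are exactly those in the ``potentials'' block of the statement. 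The diagonal middle factor then splits the whole expression into an $A^{-1}$-part, which recombines with $\br$ into $\br^{\{a\}}=\br-\bTh[\bX,\bY^*_a]\bTh[\bY^a,\bY^*_a]^{-1}\bTh[\bY^a,h]$, and an $S^{-1}$-part, which is precisely $-\bTh[\bX,\bY^*_b]^{\{a\}}\bigl(\bTh[\bY^b,\bY^*_b]^{\{a\}}\bigr)^{-1}\bTh[\bY^b,h]^{\{a\}}$. Collecting the two pieces yields \eqref{eq:fund-sup-r}.

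It remains to justify that the objects produced by the Schur complement really are the fundamental transform of the $b$-data by the $a$-transformation, i.e.\ the identifications $\bTh[\bY^b,h]^{\{a\}}=\bTh[\bY^{b\{a\}},h^{\{a\}}]$ and its analogues for the other two potentials. I would verify each by differentiation: combining the defining relations $\bTh[\bY,\bY^*]_{,i}=\bY_i\otimes\bY^*_i$ from Lemma~\ref{lem:vect-fund} with the identity $\bigl(\bTh[\bY^a,\bY^*_a]^{-1}\bigr)_{,i}=-\bTh[\bY^a,\bY^*_a]^{-1}(\bY^a_i\otimes\bY^*_{ai})\bTh[\bY^a,\bY^*_a]^{-1}$, a short calculation collapses the four resulting terms into $\bY^{b\{a\}}_i\,h^{\{a\}}_i$, which is the defining derivative of $\bTh[\bY^{b\{a\}},h^{\{a\}}]$; the free additive constant is absorbed using Part~1 of the Lemma. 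The same mechanism confirms that $\bY^{b\{a\}}_i$ and $\bY^{*\{a\}}_{bi}$ solve the transformed linear and adjoint systems with the rotation coefficients $\hat\beta_{ij}$ of Part~2, so the second step is a bona fide vectorial fundamental transformation.

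The main obstacle is organizational bookkeeping rather than a genuine difficulty: one must match the off-diagonal Schur terms to the transformed data in the correct order, and check that the two-step process is well defined, i.e.\ that both $A=\bTh[\bY^a,\bY^*_a]$ and $S=\bTh[\bY^b,\bY^*_b]^{\{a\}}$ are invertible. The latter follows from the hypothesis that $\bTh[\bY,\bY^*]$ is invertible together with the determinant identity $\det\bTh[\bY,\bY^*]=\det A\cdot\det S$ implied by the factorization above, so that invertibility of $A$ (which may be assumed by the ordering of the transformation data) forces invertibility of $S$.
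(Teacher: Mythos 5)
Your proof is correct, and it follows essentially the route the paper intends: the paper itself gives no written proof, asserting only that the vectorial form of Lemma~\ref{lem:vect-fund} ``contains already'' the permutability theorem, and the standard argument behind that assertion (in the cited reference on transformations of quadrilateral lattices) is exactly your block $LDU$/Schur-complement factorization of $\bTh[\bY,\bY^*]^{-1}$ combined with the differentiation check that the Schur-complement data $\bTh[\bY^b,h]^{\{a\}}$, $\bTh[\bY^b,\bY^*_b]^{\{a\}}$, $\bTh[\bX,\bY^*_b]^{\{a\}}$ are the potentials of the transformed data $\bY^{b\{a\}}_i$, $\bY^{*\{a\}}_{bi}$, $\bX^{\{a\}}_i$. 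Your closing remark on invertibility is also the right caveat: invertibility of $\bTh[\bY^a,\bY^*_a]$ is a genericity assumption needed for the intermediate step to be defined, and then $\det\bTh[\bY,\bY^*]=\det\bTh[\bY^a,\bY^*_a]\cdot\det\bTh[\bY^b,\bY^*_b]^{\{a\}}$ guarantees the second step.
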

\begin{Rem}
	The final result \eqref{eq:fund-sup-r} is independent of the order of making the partial transformations.
\end{Rem}
\begin{Rem}
	The above procedure fixes already the integration constants when integrating equations~\eqref{eq:Omega-i} in constructions of the potentials.
\end{Rem}
Lest us consider the simplest case of $K=2$ when the vectorial fundamental transformation is obtained as superposition of two scalar transformations
\begin{align} \label{eq:r-a}
\br^{\{a\}} & = \br - \frac{\theta^a}{\theta^a_a}\bX_a ,\qquad \bX_a = \Theta[\bX,Y^*_a], \quad \theta^a = \Theta[Y^a,h], \quad \theta^a_a = \Theta[Y^a, Y^*_a],\\
\br^{\{b\}} & = \br - \frac{\theta^b}{\theta^b_b}\bX_b ,
\qquad \bX_b = \Theta[\bX,Y^*_b], \quad \theta^b = \Theta[Y^b,h], \quad \theta^b_b = \Theta[Y^b, Y^*_b],
\end{align}
and the final result reads
\begin{equation} \label{eq:fund-sup-2}
\br^{\{a,b\}} = \br - \left( \bX_a, \bX_b \right) \begin{pmatrix} \theta^a_a & \theta ^a_b \\
\theta^b_a & \theta^b_b \end{pmatrix}^{-1} \begin{pmatrix} \theta^a \\ \theta^b \end{pmatrix}, \qquad \theta^a_b = \Theta[Y^a,Y^*_b], \quad \theta^b_a = \Theta[Y^b , Y^*_a].
\end{equation}
The point $\br^{\{a,b\}}$ belongs to the plane passing through $\br$, $\br^{\{a\}}$, $\br^{\{b\}}$ (the plane containing $\br$ and spanned by $\bX_a$ and $\bX_b$), see Figure~\ref{fig:fund-quad}, and the superposition formula \eqref{eq:fund-sup-2} can be rewritten in the form of a discrete analogue of the Laplace equation~\eqref{eq:r-ij-conj}
\begin{equation}
\br^{\{ a,b\}} - \br = \frac{\theta^{a\{b\}} \theta^a_a}{\theta^{a\{b\}}_a \theta^a} (\br^{\{ a\}} - \br) + \frac{\theta^{b\{a\}} \theta^b_b}{\theta^{b\{a\}}_b \theta^b} (\br^{\{ b\}} - \br) . 
\end{equation}
If $\br$, $\br^{\{a\}}$, $\br^{\{b\}}$ are given then the position of $\br^{\{a,b\}}$ on the plane is arbitrary due to the integration constants in definitions of $\theta^a_b$ and $\theta^b_a$. 
\begin{figure}
	\begin{center}
		\includegraphics[width=8cm]{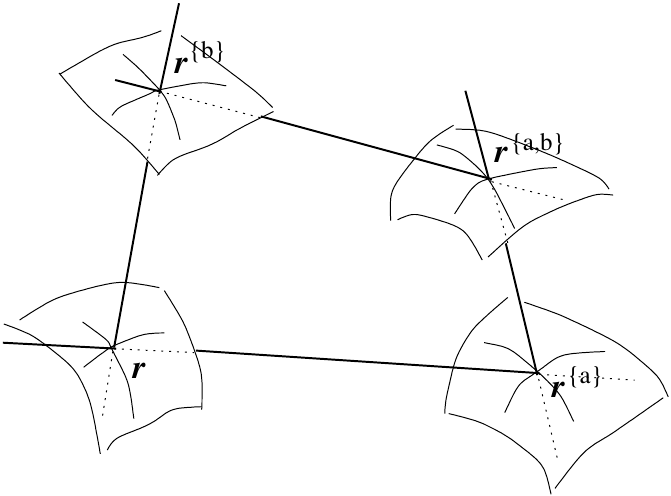}
	\end{center}
	\caption{Elementary quadrilateral of superposition of two fundamental transformations}
	\label{fig:fund-quad}
\end{figure}

The superposition of three scalar fundamental transformations
\begin{equation}
\br^{\{a,b,c\}} = \br - \left( \bX_a, \bX_b , \bX_c \right) \begin{pmatrix} \theta^a_a & \theta^a_b  & \theta^a_c\\
\theta^b_a & \theta^b_b  & \theta^b_c \\
\theta^c_a & \theta^c_b  & \theta^c_c
\end{pmatrix}^{-1} \begin{pmatrix} \theta^a \\ \theta^b \\ \theta^c \end{pmatrix}, 
\end{equation}
does not generate new integration constants. This result can be found in~\cite{Eisenhart-TS} as \emph{the extended theorem of permutability}. It implies, in particular, that  the point $\br^{\{a,b,c\}}$ is uniquely given as intersection of three planes $\langle \br^{\{a\}}, \br^{\{a,b\}}, \br^{\{a,c\}} \rangle$, $\langle \br^{\{b\}}, \br^{\{a,b\}}, \br^{\{b,c\}} \rangle$ and $\langle \br^{\{c\}}, \br^{\{a,c\}}, \br^{\{b,c\}} \rangle$; see also \cite{MQL}.

Due to equation~\eqref{eq:r-a} the vectors $\bX_a$ can be interpreted as normalized tangent vectors to the lattice direction, and $-\frac{\theta^a}{\theta^a_a}$ playing the role of the Lam\'{e} coefficients. The transformation formulas
\begin{equation} \label{eq:fund-Xa-b}
\bX_b^{\{a\}} = \bX_b - \frac{\theta^a_b}{\theta^a_a}\bX_a 
\end{equation}
give the discrete analog of the linear problem \eqref{eq:lin-X} with $-\frac{\theta^a_b}{\theta^a_a}$ playing the role of the rotation coefficients. The corresponding transformation rules provide nonlinear relations, which can be interpreted as discrete Darboux equations.
To close this Section we briefly recapitulate the basic theory of discrete conjugate nets~\cite{MQL}, which we have just obtained from the theory of transformations of conjugate nets in the spirit of works~\cite{LeBen,Levi}.

\begin{Def}
	The discrete conjugate net is a map $\br \colon \ZZ^N \to \RR^M$ of $N$-dimensional integer lattice such that for arbitrary $\bn \in \ZZ^N$ and any two directions $a \ne b$ of the lattice, the vertices $\br^{\bn}$,
	$\br^{\bn + \be_a}$, $\br^{\bn+\be_b}$, and $\br^{\bn+\be_a + \be_b}$ of elementary quadrilaterals are coplanar. 
\end{Def}
The coplanarity condition can be written in terms of the system of discrete Laplace equations
\begin{equation}  \label{eq:Laplace-dis}
\br^{\bn+ \be_a + \be_b} - \br^{\bn} = A_{ab}^{\bn} \left( \br^{\bn+\be_a } - \br^{\bn} \right) +
A_{ba}^{\bn}\left(\br^{\bn+\be_b} - \br^{\bn}\right), \qquad a\not= b. 
\end{equation}
Due to compatibility of the system \eqref{eq:Laplace-dis} 
the functions $A_{ab}^{\bn}$ can be expressed in terms of the discrete Lam\'e coefficients
\begin{equation}   \label{def:A-H}
A_{ab}^{\bn}= \frac{H_a^{\bn+\be_b}}{H_a^{\bn}} \; , \qquad a\ne b \; ,
\end{equation}
which satisfy equations
\begin{equation}
H_c^{\bn+\be_a+\be_b} - H_c^{\bn} =
\frac{H_a^{\bn+\be_b +\be_c}}{H_a^{\bn+\be_c}} \left( H_c^{\bn+\be_a} - H_c^{\bn} \right) +
\frac{H_b^{\bn+\be_a +\be_c}}{H_b^{\bn+\be_c}} \left( H_c^{\bn+\be_b} - H_c^{\bn} \right), 
\end{equation}
for distinct $a$, $b$ and $c$. If we introduce the suitably scaled tangent 
vectors $\bX_a$, $a=1,...,N$,
\begin{equation}  \label{def:HX}
\br^{\bn+\be_a} - \br^{\bn} = H_a^{\bn} \bX_a^{\bn},
\end{equation}
then equations \eqref{eq:Laplace-dis} can be rewritten as a
first order system
\begin{equation} \label{eq:lin-X-dis}
\bX_a^{\bn+\be_b} - \bX_a^{\bn} = Q_{ab}^{\bn} \bX_b^{\bn},    \qquad a\ne b \; .
\end{equation}
The proportionality factors $Q_{ab}$, called the discrete rotation coefficients, 
can be found from the linear equations
\begin{equation} \label{eq:lin-H}
H_b^{\bn+\be_a} - H_b^{\bn} = H_a^{\bn+\be_b} Q_{ab}^{\bn}, \qquad a\ne b \; ,
\end{equation}
adjoint to \eqref{eq:lin-X}.
The compatibility condition for the system~\eqref{eq:lin-X} (or its
adjoint)
gives the following form of the discrete Darboux equations
\begin{equation} \label{eq:MQL-Q}
Q_{ab}^{\bn+\be_c} - Q_{ab}^{\bn} = Q_{ac}^{\bn+\be_b}Q_{cb}^{\bn},\qquad a, b, c \quad \text{distinct}.
\end{equation}

The integrable discretization of the Darboux system was achieved first in~\cite{BoKo} within the $\bar\partial$ technique. 
The discrete analog of a conjugate net on a surface was introduced on a purely geometric basis in~\cite{Sauer2,Sauer}. Its connection with integrability theory was made first in~\cite{DCN} by connecting the Laplace sequence of two-dimensional discrete conjugate nets to Hirota's discrete generalized Toda lattice~\cite{Hirota}. Soon after that the discrete analogs of multidimensional conjugate nets were introduced in~\cite{MQL}. In particular it was shown there that the number of discrete variables in the discrete Darboux equations can be arbitrary large, and this augmentation does not restrict the solution space of the basic three-dimensional system --- this property is known nowadays as multidimensional consistency~\cite{AdlerBobenkoSuris,Nijhoff-MC} and is considered as the basic concept of the theory of discrete integrable systems~\cite{IDS}. 

The Darboux--B\"{a}cklund transformations of the discrete Darboux equations were formulated first on the algebraic level in~\cite{MDS}, and then in~\cite{TQL} the full geometric flavour of the theory was presented together with the interpretation of the transformations on the nonlocal $\bar\partial$-dressing method level, see also~\cite{LiuManas,LiuManas-SR,DMMMS,MM,MM-2}. In particular it was pointed out in~\cite{TQL}, referring to \cite{LeBen} and other similar works, that for discrete conjugate nets there is no essential difference between transformations and generation of new dimensions of the lattice. 
On the other hand, the conjugate nets are natural continuous limits of the lattices of planar quadrilaterals, what can be easily seen on the level of their Laplace equations~\eqref{eq:r-ij-conj} and \eqref{eq:Laplace-dis}. Therefore, the principle of getting the integrable discretization via B\"{a}cklund transformation approach in this particular case finds natural explanation. This shows that, from the point of view of the theory of integrable systems, the discrete ones seem to be more basic. 

The transformation theory of discrete conjugate nets can be constructed~\cite{TQL} following the geometric principles of the continuous case. Also here the notion of congruence of lines (any two neighbouring lines of the family are coplanar) turns out to be crucial. In particular, it implies natural definition of focal lattices of such congruences. The simplest congruences are given by tangent lines in a fixed direction of a discrete conjugate net.

\begin{Rem}
	The discrete Weingerten congruences, which are relevant in the theory of transformations of discrete asymptotic nets described in Section~\ref{sec:dis-ass}, do not satisfy the definition of a discrete congruences of the theory of transformations of discrete conjugate nets, as it was pointed to me by Maciej Nieszporski. In fact, as it was explained in~\cite{AD-W}, the discrete Weingarten congruences provide two-dimensional lattices of planar quadrilaterals in the Pl\"{u}cker quadric. Their theory fits therefore into the general scheme of quadratic reductions of discrete conjugate nets~\cite{q-red}. 
\end{Rem}

\section{Curvature coordinates, the Ribaucour transformation,\\ and circular lattices} \label{sec:circ}
The notion of conjugate nets is invariant with respect to projective transformations of the ambient space. By imposing an additional geometric structure one can consider the corresponding reduction of the general theory. In the Euclidean space (we work in the standard orthonormal basis where the scalar product of two vectors is given with the help of transposition $\bu \cdot \bv = \bu^t \, \bv$) one can consider orthogonal conjugate nets, which turn out to be curvature coordinates on the given submanifold. The corresponding reduction of the fundamental transformation is provided by the Ribaucour transformation. 

The classical Ribaucour transformation~\cite{Ribaucour} concerns surfaces in three dimensional Euclidean space $\EE^3$ such that lines of curvature (which are both conjugate and orthogonal) correspond and such that the normals to both surfaces in corresponding points in a point (the center of the Ribaucour sphere) equidistant to both of them. In general one can consider $N$-dimensional submanifold of $\EE^M$, $N\leq M$, parametrized by conjugate and orthogonal coordinates. 
%The most known special case is for $N=M$, when the Darboux %system~\eqref{eq:Darboux-beta} is supplemented by %additional equations
%\begin{equation}
%\beta_{ij,i} + \beta_{ji,j} + \sum_{k\neq i,j} %\beta_{ki}\beta_{kj} = 0, \qquad i\neq j,
%\end{equation}
%but we will keep the full generality.

The orthogonality constraint
\begin{equation}
\br_{,i} \cdot \br_{,j} = 0, \qquad i\neq j,
\end{equation}
implies that the function 
\begin{equation*}
\rho = \frac{1}{2}\, \br \cdot \br,
\end{equation*}
satisfies the Laplace equations of the orthogonal conjugate net $\br$, and the functions
\begin{equation*}
X^\circ_i = \br \cdot \bX_i, 
\end{equation*}
give the corresponding solution to the linear system~\eqref{eq:lin-X}, i.e. $\rho = \bTh[X^\circ , h]$.
Equivalently, the above facts imply orthogonality of the conjugate net. 

The corresponding Ribaucour reduction of the vectorial fundamental transformation, compatible with the orthogonality of a given conjugate net, can be constructed using only half of the transformation data, for analogous but different description see~\cite{LiuManas-R}. The following result can be checked directly by calculating derivatives of both sides of the formulas.
\begin{Lem}
	Given solution $\bY^*_i$ of the adjoint linear problem~\eqref{eq:lin-h} of the orthogonal conjugate net $\br$ then
	\begin{equation} \label{eq:Rib-Y-Y}
	\bY_i = \bTh[\bX,\bY^*]^t \bX_i,
	\end{equation}
	give a solution to the linear problem~\eqref{eq:lin-X} of the net. Moreover, the integration constants in construction of the corresponding potentials $\bTh[\bY,\bY^*]$, $\bTh[\bY,h]$ and $\bTh[X^\circ,\bY^*]$ can be chosen such that
	\begin{gather} \label{eq:Rib-constr}
	\bTh[\bY,\bY^*]^t + \bTh[\bY,\bY^*] = \bTh[\bX,\bY^*]^t \: \bTh[\bX,\bY^*],\\  \label{eq:Rib-constr2}
	\bTh[\bY,h] = \bTh[\bX,\bY^*]^t  \br - \bTh[X^\circ ,\bY^*]^t.
	\end{gather}
\end{Lem}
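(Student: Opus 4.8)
The plan is to verify the three assertions by \emph{direct differentiation}, as the statement already indicates, using only the defining relations \eqref{eq:Omega-i} of the potentials, the linear system \eqref{eq:lin-X}, and the orthogonality of the net. Writing $\Omega=\bTh[\bX,\bY^*]$ for brevity, I first record the only consequence of orthogonality that is needed: since $\br_{,i}=h_i\bX_i$, the relations $\br_{,i}\cdot\br_{,j}=0$ for $i\neq j$ are equivalent to $\bX_i\cdot\bX_j=0$ for $i\neq j$. For the first claim I differentiate $\bY_i=\Omega^t\bX_i$ with respect to $u_j$ for $i\neq j$; by \eqref{eq:Omega-i} one has $\Omega_{,j}=\bX_j\otimes\bY^*_j$, and by \eqref{eq:lin-X} one has $\bX_{i,j}=\beta_{ij}\bX_j$. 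The first resulting term carries the scalar factor $\bX_j^t\bX_i=\bX_i\cdot\bX_j$, which vanishes by orthogonality, while the second gives $\Omega^t\beta_{ij}\bX_j=\beta_{ij}\bY_j$. Hence $\bY_{i,j}=\beta_{ij}\bY_j$, so $\bY_i$ solves \eqref{eq:lin-X}; note that this already holds for any value of the integration constant in $\Omega$.

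For \eqref{eq:Rib-constr} I compute the $u_k$-derivative of both sides and check that they agree. On the right, $(\Omega^t\Omega)_{,k}=\Omega_{,k}^t\Omega+\Omega^t\Omega_{,k}$; substituting $\Omega_{,k}=\bX_k\otimes\bY^*_k$ and using the very definition \eqref{eq:Rib-Y-Y} in the form $\Omega^t\bX_k=\bY_k$ turns this into $(\bY_k\otimes\bY^*_k)^t+\bY_k\otimes\bY^*_k$. On the left, \eqref{eq:Omega-i} gives $\bTh[\bY,\bY^*]_{,k}=\bY_k\otimes\bY^*_k$, so the derivative of the symmetrized combination is the identical expression. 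Therefore the two sides of \eqref{eq:Rib-constr} differ by a constant matrix $D$; as both sides are manifestly symmetric, $D$ is symmetric, and shifting $\bTh[\bY,\bY^*]$ by the admissible constant $-\frac{1}{2}D$ secures the equality.

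The constraint \eqref{eq:Rib-constr2} is handled in the same way. Differentiating its right-hand side produces three contributions: $\Omega_{,k}^t\br=(\bY^*_k)^t\,(\bX_k\cdot\br)=(\bY^*_k)^t X^\circ_k$ from the potential, $\Omega^t\br_{,k}=h_k\Omega^t\bX_k=h_k\bY_k$ from $\br_{,k}=h_k\bX_k$, and $-(\bTh[X^\circ,\bY^*]_{,k})^t=-(\bY^*_k)^t X^\circ_k$ from the last term, where I used $X^\circ_k=\br\cdot\bX_k$. The first and third cancel, leaving $h_k\bY_k$, which equals the derivative $\bTh[\bY,h]_{,k}=\bY_k h_k$ of the left-hand side; choosing the additive constant in $\bTh[\bY,h]$ then enforces \eqref{eq:Rib-constr2}. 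The one point that genuinely needs attention — and the main obstacle, since the differentiations themselves are routine — is the \emph{consistency of the integration-constant choices}. This causes no trouble: \eqref{eq:Rib-constr} is secured purely through the symmetric part of the constant of $\bTh[\bY,\bY^*]$, whereas \eqref{eq:Rib-constr2} is secured purely through the constant of $\bTh[\bY,h]$, and these two potentials are independent; the constants of $\Omega=\bTh[\bX,\bY^*]$ (which fixes the Combescure transform and hence $\bY_i$ itself) and of $\bTh[X^\circ,\bY^*]$ may be prescribed in advance and enter neither adjustment, so all requirements are met simultaneously.
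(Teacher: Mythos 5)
Your proof is correct and follows exactly the route the paper indicates: the paper merely states that the lemma ``can be checked directly by calculating derivatives of both sides of the formulas,'' and your computation supplies precisely those verifications, including the correct use of orthogonality ($\bX_i\cdot\bX_j=0$) in the first step and the identity $\bX_k^t\,\bTh[\bX,\bY^*]=\bY_k^t$ in the second. The closing remark on the independence of the integration constants of $\bTh[\bY,\bY^*]$ and $\bTh[\bY,h]$ is a worthwhile addition that the paper leaves implicit.
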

\begin{Th}
	The vectorial fundamental transformation of orthogonal conjugate net $\br$ calculated by \eqref{eq:vect-fund} with application of the above constraints \eqref{eq:Rib-constr}- \eqref{eq:Rib-constr2} is again conjugate orthogonal net.
\end{Th}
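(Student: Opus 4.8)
The plan is to prove orthogonality at the level of the transformed normalized tangent vectors, using that the fundamental-transform Theorem already guarantees $\hat\br$ to be a conjugate net. That Theorem supplies the transformed normalized tangent vectors as $\hat\bX_i = \bX_i - \bTh[\bX,\bY^*]\,\bTh[\bY,\bY^*]^{-1}\bY_i$ together with $\hat\br_{,i} = \hat{h}_i\hat\bX_i$, so the defining orthogonality relation $\hat\br_{,i}\cdot\hat\br_{,j}=0$ for $i\neq j$ is equivalent to $\hat\bX_i\cdot\hat\bX_j = 0$. It therefore suffices to compute the Euclidean inner products of the $\hat\bX_i$ and show they vanish off the diagonal.

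To carry this out I would abbreviate $W=\bTh[\bX,\bY^*]$ and $\Omega = \bTh[\bY,\bY^*]$, and expand
\begin{equation*}
\hat\bX_i\cdot\hat\bX_j = \bX_i^t\bX_j - \bX_i^t W\Omega^{-1}\bY_j - \bY_i^t\Omega^{-t}W^t\bX_j + \bY_i^t\Omega^{-t}W^tW\Omega^{-1}\bY_j .
\end{equation*}
The first term vanishes by orthogonality of the original net, $\bX_i\cdot\bX_j=0$ for $i\neq j$. The Ribaucour reduction \eqref{eq:Rib-Y-Y}, $\bY_i = W^t\bX_i$, yields $\bX_i^tW = \bY_i^t$ and $W^t\bX_j = \bY_j$, so the two cross terms collapse to $-\bY_i^t\Omega^{-1}\bY_j$ and $-\bY_i^t\Omega^{-t}\bY_j$. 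Finally the quadratic term is handled by the symmetric constraint \eqref{eq:Rib-constr}, $W^tW = \Omega^t+\Omega$, which turns $\bY_i^t\Omega^{-t}W^tW\Omega^{-1}\bY_j$ into $\bY_i^t\Omega^{-1}\bY_j + \bY_i^t\Omega^{-t}\bY_j$; this cancels the two cross terms exactly, leaving $\hat\bX_i\cdot\hat\bX_j = 0$. Hence the transformed frame is orthogonal and $\hat\br$ is an orthogonal conjugate net.

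The algebra above is routine once the reduction data are in place; the substantive content lies in the preceding Lemma, and the step I would check most carefully is that the two constraints can be imposed consistently and are preserved by the flow. Concretely, I would differentiate $W^tW - \Omega - \Omega^t$ using \eqref{eq:Omega-i} together with $\bY_i = W^t\bX_i$: one finds $(W^tW)_{,i} = \bY_i\bY^*_i + (\bY_i\bY^*_i)^t = (\Omega+\Omega^t)_{,i}$, so the combination is constant and \eqref{eq:Rib-constr} can be fixed once and for all; a parallel computation handles \eqref{eq:Rib-constr2}. This is the real obstacle, in that it requires the orthogonality relations $\bX_i\cdot\bX_j=0$ and the identity $\rho = \bTh[X^\circ,h]$ of the original net to interlock correctly with the potential equations.

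Finally, I would record that orthogonality of the frame uses only \eqref{eq:Rib-constr}, since $\hat\bX_i$ does not involve $\bTh[\bY,h]$ and is thus independent of the position of $\hat\br$ along its congruence. The second constraint \eqref{eq:Rib-constr2} serves instead to pin $\hat\br$ down so that its own orthogonality potentials $\hat\rho = \tfrac{1}{2}\hat\br\cdot\hat\br$ and $\hat{X}^\circ_i = \hat\br\cdot\hat\bX_i$ are again of the reduced form — the transforms of $\rho$ and $X^\circ$ under the same data. Verifying this confirms that the reduced data reproduce the input structure, so the Ribaucour reduction is stable under iteration, which is what one needs for its permutability theory.
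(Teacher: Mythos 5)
Your proof is correct, but it takes a genuinely different route from the paper's. The paper argues at the level of the orthogonality potential: it uses the equivalence (stated just before the Lemma) between orthogonality of a conjugate net and the property that $\tfrac{1}{2}\hat\br\cdot\hat\br$ solves its Laplace system, and then verifies, using both constraints \eqref{eq:Rib-constr} and \eqref{eq:Rib-constr2}, that $\tfrac{1}{2}\hat\br\cdot\hat\br = \rho - \bTh[X^\circ,\bY^*]\bTh[\bY,\bY^*]^{-1}\bTh[\bY,h] = \hat\rho$ is precisely the fundamental transform of $\rho=\tfrac{1}{2}\br\cdot\br$, hence a solution of the Laplace system of $\hat\br$. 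You instead work at the level of the frame, computing $\hat\bX_i\cdot\hat\bX_j$ directly; the cancellation via $\bY_i=W^t\bX_i$ and $W^tW=\Omega+\Omega^t$ is exactly right, and the reduction of $\hat\br_{,i}\cdot\hat\br_{,j}=0$ to $\hat\bX_i\cdot\hat\bX_j=0$ is legitimate since the transformation theorem supplies $\hat\br_{,i}=\hat h_i\hat\bX_i$. Your computation has the merit of isolating which constraint does what: frame orthogonality uses only \eqref{eq:Rib-Y-Y} and \eqref{eq:Rib-constr}, while \eqref{eq:Rib-constr2} is needed to show that the transformed potentials $\hat\rho$, $\hat X^\circ_i$ are again of the reduced form, which is what the paper's route establishes in one stroke and what is needed for iterating the transformation. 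Your consistency check that $W^tW-\Omega-\Omega^t$ is constant is a re-derivation of part of the preceding Lemma (which the paper leaves as ``direct calculation''), so it is welcome but not an additional gap you need to fill.
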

The proof is based on the observation that orthogonality of the conjugate net $\hat\br$ is equivalent to the fact that the function $\frac{1}{2} \hat{\br} \cdot \hat{\br}$ satisfies the Laplace system of the net. The statement follows then from direct verification, using the constraints \eqref{eq:Rib-constr}-\eqref{eq:Rib-constr2}, and showing  that
\begin{equation}
\frac{1}{2} \hat{\br}\cdot \hat{\br} = \frac{1}{2} \br \cdot \br  - \bTh[X^\circ,\bY^*] \bTh[\bY,\bY^*]^{-1} \bTh[\bY,h] = \hat\rho,
\end{equation}
is therefore the corresponding transform of the solution $\rho$ of the Laplace system of $\br$.

The formulation of the theorem on permutability of superpositions of vectorial Ribaucour transformations reads as in Theorem~\ref{th:vect-fund}. One has to check, what can be done by direct calculation, that on the intermediate level the reduction conditions~\eqref{eq:Rib-Y-Y}-\eqref{eq:Rib-constr2} are satisfied by the transformed data.

Geometry of integrable discrete analog of orthogonal conjugate nets follows from the observation made by
Demoulin, who showed \cite{Demoulin-R} that the vertices $\br$, $\br^{\{a\}}$, $\br^{\{b\}}$, and $\br^{\{a,b\}}$ of the Ribaucour transform are concircular, see Figure~\ref{fig:Rib-circ}. In proving that we will follow \cite{Eisenhart-TS}, where this fact was shown as an implication of the orthogonality constraint imposed on the fundamental transformation.
Given three points $\br$, $\br^{\{a\}}$ and $\br^{\{b\}}$, the coordinates of the center of the circle passing through them are of the form
\begin{equation*}
\br + \lambda \bX_a + \mu \bX_b,
\end{equation*}
where $\lambda$ and $\mu$ are determined by the condition that the lines joining the center to the mid-points of the segments $[\br , \br^{\{a\}}]$ and $[\br ,\br^{\{b\}}]$, are perpendicular to these segments. These conditions, due to the transformation formulas~\eqref{eq:r-a} and the diagonal elements of the reduction condition~\eqref{eq:Rib-constr}
\begin{equation}
2 \theta^a_a = \bX_a \cdot \bX_a ,
\end{equation} 
are reducible to
\begin{equation} \label{eq:circ-l-m}
\theta^a + (\lambda \bX_a + \mu \bX_b) \cdot \bX_a = 0, \qquad \theta^b + (\lambda \bX_a + \mu \bX_b) \cdot \bX_b = 0.
\end{equation} 
Analogously, the condition that the line joining the center to the mid-point of the segments $[\br^{\{a\}} ,\br^{\{a,b\}}]$ is perpendicular to the segment is
\begin{equation}
\left( \br^{\{ a\}} + \frac{1}{2} \left( \br^{\{ a,b \}} - \br^{\{a\}} \right) - \br -\lambda \bX_a - \mu \bX_b \right) \cdot \bX_b^{\{ a\}} = 0,
\end{equation}
what can be verified using the transformation formulas, equations~\eqref{eq:circ-l-m}, and off-diagonal elements 
of the reduction condition~\eqref{eq:Rib-constr}
\begin{equation}
\theta^a_b + \theta^b_a = \bX_a \cdot \bX_b .
\end{equation} 
\begin{Rem}
	In \cite{DMS} we proved the circularity of the quadrilaterals by showing equivalence of the constraint with equation
	\begin{equation} \label{eq:circ-X-X}
	\bX_a \cdot \bX_b^{\{ a\}} + \bX_b \cdot \bX_a^{\{ b\}} =0,
	\end{equation}
	an immediate consequence of the transformation formulas \eqref{eq:fund-Xa-b} and of the condition~\eqref{eq:Rib-constr}.
\end{Rem}
\begin{figure}
	\begin{center}
		\includegraphics[width=10cm]{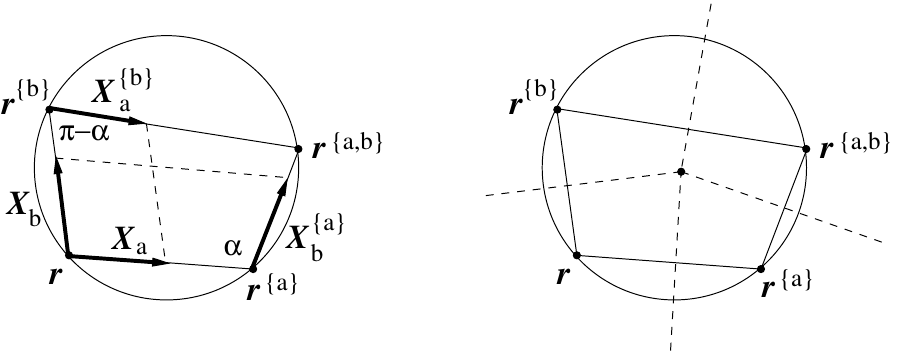}
	\end{center}
	\caption{The superposition of two scalar Ribaucour transforms of an orthogonal conjugate net. The vertices $\br$, $\br^{\{a\}}$, $\br^{\{b\}}$, and $\br^{\{a,b\}}$ of the elementary quadrilateral are concircular. For circular quadrilaterals opposite angles sum up to $\pi$. The center of the circle is the intersection point of bisectors of all the four sides of the quadrilateral}
	\label{fig:Rib-circ}
\end{figure} 

In the spirit of works~\cite{LeBen,Levi} one can conclude that the integrable discrete analogs of orthogonal conjugate nets is provided by circular lattices.

\begin{Def}
	The integrable discrete analogue of orthogonal conjugate net is a map $\br \colon \ZZ^N \to \EE^M$ of $N$-dimensional integer lattice such that for arbitrary $\bn \in \ZZ^N$ and any two directions $a \ne b$ of the lattice, the vertices $\br^{\bn}$,
	$\br^{\bn + \be_a}$, $\br^{\bn+\be_b}$, and $\br^{\bn+\be_a + \be_b}$ of elementary quadrilaterals are concircular. 
\end{Def}

Integrability of the circular reduction of lattices of planar quadrilaterals follows from the fact that the circularity constraint is preserved \cite{CDS} by evolution of the lattices expressed by the extended permutability theorem, i.e. if the three quadrilaterals with vertices
\begin{gather*}
\{ \br, \br^{\{a\}}, \br^{\{b\}}, \br^{\{a,b\}}\}, \qquad
\{ \br, \br^{\{a\}}, \br^{\{c\}}, \br^{\{a,c\}}\}, \qquad
\{ \br, \br^{\{b\}}, \br^{\{c\}}, \br^{\{b,c\}}\}, \qquad
\end{gather*}
are circular, then the three circles through the triplets
\begin{equation*}
\{ \br^{\{a\}}, \br^{\{a,b\}}, \br^{\{a,c\}} \}, \qquad
\{ \br^{\{a\}}, \br^{\{a,b\}}, \br^{\{a,c\}} \}, \qquad
\{ \br^{\{a\}}, \br^{\{a,b\}}, \br^{\{a,c\}} \}, 
\end{equation*}
intersect in the point $\br^{\{a,b,c\}}$, what is equivalent to the Miquel theorem of elementary geometry. In view of multidimensional consistency of lattices of planar quadrilaterals this implies the multidimensional consistency of circular lattices. Soon after geometric proof of  their integrability, it was confirmed by the non-local~$\bar{\partial}$ dressing technique in~\cite{DMS}, the algebro-geometric construction~\cite{AKV}, by construction of the corresponding reduction of the fundamental transformation~\cite{q-red,LiuManas-SR}, and within the free-fermion description of the KP hierarchy~\cite{DMM}.

Integrability of other basic reductions of the multidimensional lattice of planar quadrilaterals was investigated in~\cite{DS-sym}, see also~\cite{BobSur}. We remark that transformations~\cite{NiSchief} of the system of discrete Moutard equations~\eqref{eq:Mout-N} and of the corresponding Miwa's discrete BKP system \eqref{eq:BKP-nlin} can be obtained as an integrable reduction of the fundamental transformation~\cite{BQL}. For review of the theory and many other geometric aspects of integrable discrete systems see~\cite{BobSur}. Interesting applications of the theory to architectural design and computer graphics are discussed in~\cite{WallnerPottmann,WallnerPottmann2}.

\section{Conclusion and open problems}

We presented interpretation of known results in the theory of discrete asymptotic and discrete conjugate nets from the \emph{discretization by B\"{a}cklund transformations} point of view. We collected both classical formulas of XIXth century differential geometry of surfaces and their transformations, and more recent results from geometric theory of integrable discrete equations. Darboux--B\"{a}cklund transformations of difference operators are reviewed in~\cite{AD+MN}. Old ideas of differential geometry of surfaces relevant to integrability are still sources of inspiration for contemporary research, see for example \cite{Burstall,Dellinger}.

The theory of multidimensional discrete conjugate nets is based on the simple geometric principle of planarity of elementary quadrilaterals. Their integrable reductions often come from geometric understanding, in the spirit of Klein, of various reductions of projective geometry by introducing absolute objects and corresponding restrictions of the group of projective transformations. Moreover, basic analytic tools of the theory of integrable systems, like the non-local $\bar\partial$-dressing method and the algebro-geometric techniques, could be applied to construct such lattices and corresponding solutions to discrete Darboux equations in a rather pure form. 
It turns out however, surprisingly, that the principle of \emph{coplanarity of four points} can be replaced, without loss of generality from the integrability viewpoint, by the condition of \emph{collinearity of three points}~\cite{Dol-Des}. The multidimensional consistency of such lattices is equivalent to the Desargues theorem of projective geometry. 
The collinearity condition for such $N$-dimensional Desargues lattice in natural homogeneous coordinates of the projective space can be rewritten in the form
\begin{equation} \label{eq:lin-Hir}
\bpsi^{\bn+\be_i} - \bpsi^{\bn+\be_j} = u^{\bn}_{ij} \bpsi^{\bn}, \qquad 1\leq i\neq j \leq N.
\end{equation}
The compatibility condition
\begin{equation}
u^{\bn}_{ji} + u^{\bn}_{ij} = 0, \qquad u^{\bn}_{ij} + u^{\bn}_{jk} + u^{\bn}_{ki}= 0, \qquad u^{\bn}_{ij}u^{\bn+\be_j}_{ik} = u^{\bn}_{ik}u^{\bn+\be_k}_{ij}, \qquad i,j,k \quad \text{disctinct}
\end{equation}
can be simplified by introducing the single potential $\tau$ such that
\begin{equation}
u^{\bn}_{ij} = \frac{ \tau^{\bn+\be_i + \be_j} \tau^{\bn} } {\tau^{\bn+\be_i} \tau^{\bn + \be_j} } = - u^{\bn}_{ji}, \qquad i<j,
\end{equation}
which satisfies the system of Hirota equations~\cite{Hirota,Miwa}
\begin{equation} \label{eq:Hirota}
\tau^{\bn+\be_i}\tau^{\bn+\be_j + \be_k} - 
\tau^{\bn+\be_j}\tau^{\bn+\be_i + \be_k} +
\tau^{\bn+\be_k}\tau^{\bn+\be_i + \be_j} = 0, \qquad
i<j<k.
\end{equation}
As it was shown in~\cite{Dol-Des} the $2N-1$-dimensional Hirota system is equivalent to discrete Darboux equations of $N$-dimensional discrete conjugate net supplemented by $(N-1)$ dimensional lattice of its Laplace transformations.

The Hirota system~\eqref{eq:Hirota} is probably the most important discrete integrable system both from theoretical~\cite{Miwa} and practical \cite{KNS-rev} viewpoint. In particular, its multidimensional consistency gives rise to the full hierarchy of commuting symmetries of the KP equation~\cite{DKJM}. Originally it was called the discrete Toda system, which is another sign of a general rule that single integrable discrete equation can lead, via different limits, to many differential equations. In fact, the original name is related to the theory of two-dimensional lattices of planar quadrilaterals and their Laplace transforms \cite{DCN}, while the other name is related to the Desargues lattices~\cite{Dol-Des}.   Darboux--B\"{a}cklund transformations of the Hirota system were studied in~\cite{Nimmo-KP,Nimmo-rev}, where the fundamental transformation of the theory of discrete conjugate nets appears under the name of binary Darboux transformation (the L\'{e}vy transformation is called there the elementary Darboux transformation). 

Among distinguished integrable reductions of the KP hierarchy of equations there is the BKP hierarchy, which is encompassed by the single Miwa equation~\eqref{eq:BKP-nlin}. The CKP hierarchy~\cite{DJKM-CKP} leads, via the Darboux--B\"{a}cklund transformations to the corresponding reduction \cite{Kashaev-LMP,Schief-JNMP} of the Hirota system, see also \cite{DS-sym,CQL} for geometry of the corresponding reduction of the discrete conjugate nets. In parallel to the reductions of the KP hierarchy, which is based on restrictions of the Lie algebra~$\mathfrak{gl}(\infty)$ to its orthogonal and symplectic subalgebras~$\mathfrak{so}(\infty)$ and $\mathfrak{sp}(\infty)$, on the discrete level one has the corresponding root lattices and their affine Weyl group interpretations~\cite{Dol-AN,Dol-Tampa} of the Hirota system, the Miwa (discrete BKP) and the Kashaev (discrete CKP) equations. Many of known integrable systems, both discrete and continuous, can be obtained as its further reductions, see \cite{NY-AN,Sakai} for applications of the root lattices and affine Weyl groups to understand Painlev\'{e} equations. It may seem therefore that integrable discrete systems are enough to fully understand integrability.

There is however a distinguished class of integrable differential equations, called dispersionless systems~\cite{ManakovSantini,CalderbankKruglikov,Dunajski,DunajskiKrynski,KruglikovMorozov,Sergyeyev-rec},  which escapes the above interpretation. The distinguished examples of such systems are the so called generalized heavenly equations \cite{Schief-gh,DubrovFerapontov}, 
which describe self-dual Einstein spaces~\cite{Plebanski}, or the self-dual Yang--Mills equations~\cite{AblowitzChakravartyTakhtajan}, and both of them are genuine four dimensional. 

%\subsection*{Acknowledgements}

%We thank ....

\label{lastpage}
\end{document}